\theoremstyle{newstyle}
\newtheorem{theorem}{Theorem}[section]
\newtheorem{fact}[theorem]{Fact}
\newtheorem{corollary}[theorem]{Corollary}
\newtheorem{definition}{Definition}
\newtheorem{construction}[theorem]{Construction}
\newcommand{\thmref}[1]{Theorem~\ref{thm:#1}}
\newcommand{\corref}[1]{Corollary~\ref{cor:#1}}
\newcommand{\secref}[1]{Section~\ref{sec:#1}}
\newcommand{\figref}[1]{Figure~\ref{fig:#1}}
\newcommand{\eqnref}[1]{equation~(\ref{eq:#1})}
\newcommand{\F}{\mathbb{F}}
\newcommand{\p}{\mathsf{Pr}}
\newcommand{\by}{\textbf{y}}
\newcommand{\bx}{\textbf{x}}
\newcommand{\bz}{\textbf{z}}
\newcommand{\cA}{\mathcal{A}}
\newcommand{\cM}{\mathcal{M}}
\newcommand{\cP}{\mathcal{P}}
\newcommand{\cR}{\mathcal{R}}
\newcommand{\prover}{\mathsf{Prover}}
\newcommand{\dist}{\mathsf{dist}}
\newcommand{\brak}[1]{{\langle {#1} \rangle}}
\newcommand{\set}[1]{\left\{ {#1} \right\}}
\newcommand{\paren}[1]{\left( {#1} \right)}
\newcommand{\sparen}[1]{\left[ {#1} \right]}
\newcommand{\pdp}{\mathsf{PDP}}
\newcommand{\por}{\mathsf{PoR}}
\newcommand{\mspor}{\mathsf{MPoR}}
\newcommand{\succp}{\mathsf{succ}(\cP)}
\newcommand{\verifier}{\mathsf{Verifier}}
\newcommand{\enc}{\mathsf{Enc}}
\newcommand{\ext}{\mathsf{Extractor}}
\newcommand{\suc}{\mathsf{succ}}
\newtheorem{example}{Example}[section]
\begin{document}

\title{Multi-prover Proof-of-Retrievability}
\author{
Maura~B.~Paterson\\
Department of Economics,
Mathematics and Statistics\\ Birkbeck, University of London,
Malet Street, London WC1E 7HX, UK \\
{\sf m.paterson@bbk.ac.uk}
\and
Douglas~R.~Stinson\thanks{D.~Stinson's research is supported by NSERC discovery grant 203114-11}
\\David R. Cheriton School of Computer Science\\ University of Waterloo,
Waterloo, Ontario, N2L 3G1, Canada \\
{\sf dstinson@math.uwaterloo.ca}
\and
Jalaj Upadhyay
\\ College of Information Science and Technology \\ Pennsylvania State University, State College, Pennsylvania 16802, U.S.A. \\
{\sf jalaj@psu.edu}
\thanks{Work done while at David R. Cheriton School of Computer Science, University of Waterloo,
Waterloo, Ontario, N2L 3G1, Canada}

}
\maketitle

\begin{abstract}
There has been considerable recent interest in ``cloud storage'' wherein a user asks a server to store a large file. One issue is whether the user can verify that the server is actually storing the file, and typically a challenge-response protocol is employed to convince the user that the file is indeed being stored correctly. The security of these schemes is phrased in terms of an extractor which will recover  the file given any ``proving algorithm'' that has a sufficiently high success probability. This forms the basis of {\em proof-of-retrievability} ($\por$)  systems. 

In this paper, we study multiple server $\por$ systems. Our contribution in multiple-server $\por$ systems is as follows.
\begin{enumerate}
	\item We formalize security definitions for two possible scenarios: (i) when a threshold  of servers succeed with high enough probability (worst-case) and (ii) when the average of the success probability of all the servers is above a threshold (average-case). We also motivate the study of confidentiality of the outsourced message. 
	\item  We give $\mspor$ schemes which are secure under both these security definitions and provide reasonable confidentiality guarantees even when there is no restriction on the computational power of the servers.	We also show how classical statistical techniques used by Paterson, Stinson and Upadhyay (Journal of Mathematical Cryptology: 7(3)) can be extended to evaluate whether the responses of the provers are accurate enough to permit successful extraction.
	\item We also look at one  specific instantiation of our construction when instantiated with the unconditionally secure version of the Shacham-Waters scheme  (Asiacrypt, 2008). This scheme gives reasonable security and privacy guarantee. We show that, in the multi-server setting with computationally unbounded provers, one can overcome the limitation that the verifier needs to store as much secret information as the provers. 
\end{enumerate}

\end{abstract}

\section{Introduction} \label{sec:introduction}
In the recent past, there has been a lot of activity on remote storage and the associated cryptographic problem of integrity of the stored data. This question becomes even more important when there are reasons to believe that the remote servers might act maliciously, i.e., one or more servers can delete (whether accidentally or on purpose) a part of the data since there is a good chance that the data will never be accessed and, hence, the client would never find out! In order to assuage such concerns, one would prefer to have a simple auditing system that convinces the client if and only if the server has the data. Such audit protocols, called {\em proof-of-retrievability} ($\por$) systems, were introduced by Juels and Kaliski~\cite{JK07}, and closely related {\em proof-of-data-possession} ($\pdp$) systems were introduced by Ateniese {\it et al.}~\cite{ABCHKPS07}.

In a $\por$ protocol, a client stores a message $m$ on a remote server and keeps only a  short private {\em fingerprint} locally. At some later  time, when the client wishes to verify the integrity of its message, it can run an audit protocol in which it acts as a verifier while the server proves that it has the client's data. The formal security of a $\por$ protocol is expressed in terms of an {\em extractor} -- there exists an extractor with (black-box or non-black box) access to the proving algorithm used by the server to respond to the client's challenge, such that the extractor retrieves the original message given any adversarial server which passes the audits  with a threshold probability. Apart from this security requirement, two practical requirements of any $\por$ system would be to have a reasonable bound on the communication cost of every audit and small storage overhead on both the client and server.  

$\por$ systems were originally defined for the single-server setting. 
However, in the real world, it is highly likely that a client would store its data on more than one  server. This might be due to a variety of reasons. For example, a client might wish to have a certain degree of redundancy if one or more servers fails. In this case, the client is more likely to store multiple copies of the same data. Another possible scenario could be that the client does not trust a single server with all of its  data. In this case, the client might distribute the data across multiple servers. Both of these settings have been studied previously in the literature.


The first such study was initiated by Curtmola {\it et al.}~\cite{CKBA08}, who considered the first of the above two cases. They addressed the problem of storing \ copies of a single file on multiple servers. This is an attractive solution considering the fact that replication is a fundamental principle in ensuring the availability and durability of data. Their system allows the client to audit a subset of servers even if some of them collude. 

On the other hand, Bowers, Juels, and Oprea~\cite{BJO09} considered the second of the above two cases. They studied a system where the client's data is distributed and stored on different servers. This  ensures that none of the servers has the whole data. 

Both of these systems covered one specific instance of the wide spectrum of possibilities when  more than one servers is involved. For example, none of the works mentioned above addresses the question of the privacy of  data. Both of them argue that, for privacy, the client can encrypt its file before storing it on the servers. These systems are secure only in the computational setting and the privacy guarantee is  dependent on the underlying encryption scheme. On the other hand, there are known primitives in the setting of distributed systems, like secret sharing schemes, that are known to be unconditionally secure. Moreover, we can also utilize  cross-server redundancy to get more practical systems. 
 

\subsection{Our Contributions}
In~\secref{model}, we give the formal description of multi-server $\por$ ($\mspor$) systems. We state the definitions for  worst-case and the average-case secure $\mspor$ systems. We also motivate the privacy requirement and state the privacy definition for $\mspor$ systems. In~\secref{prelim}, we define various primitives to the level required to understand this paper.

In~\secref{worst}, we give a construction of an $\mspor$ scheme that achieves worst-case security when the malicious servers are computationally unbounded. Our construction is based on ramp schemes and a single-server $\por$ scheme. Our  construction achieves confidentiality of the message. To exemplify our scheme, we instantiate this scheme with a specific form of ramp scheme.

In~\secref{average}, we give a construction of an $\mspor$ scheme that achieves average-case security against computationally unbounded adversaries. For an $\mspor$ system that affords average-case security, we also show that an extension of classical statistical techniques used by Paterson, Stinson and Upadhyay~\cite{PSU13} can be used to provide a basis for estimating whether the responses of the servers are accurate enough to allow successful extraction.

One of the benefits of an $\mspor$ system is that it provides cross-server redundancy. In the past, this feature has been used by Bowers {\it et al.}~\cite{BJO09} to propose a multi-server system called HAIL. We first note that the constructions in~\secref{worst} and~\secref{average} do not provide any improvement on the storage overhead of the server or the client. In~\secref{optimization}, we give a construction based on the Shacham-Waters protocol~\cite{SW08} that allows significant reduction of the storage overhead of the client in the multi-server setting. 

\subsection{Related Works}
The concept of {\it proof-of-retrievability} is due to Juels and Kaliski~\cite{JK07}. A {\sf PoR} scheme incorporates a challenge-response protocol in which a verifier can check that a message is being stored correctly, along with an {\it extractor} that will actually reconstruct the message, given the algorithm of a ``prover'' who is able to correctly respond to a sufficiently high percentage of challenges.

There are also papers that describe the closely related (but slightly weaker) idea of a {\it proof-of-data-possession scheme} (\textsf{PDP} scheme), e.g., \cite{ABCHKPS07}. A {\sf PDP} scheme permits the possibility that 
not all of the message blocks can be reconstructed. Ateniese {\it et al.}~\cite{ABCHKPS07} also introduced the idea of using {\it homomorphic authenticators} to reduce the communication complexity of the system. This scheme was improved in a follow-up work by Ateniese {\it et al.}~\cite{ADMT08}. 
Shacham and Waters~\cite{SW08} later showed that the scheme of Ateniese {\it et al.}\ \cite{ABCHKKPS11} can be transformed into a $\por$  scheme by constructing an extractor that extracts the file  from the responses of the $\prover$ on the audits. 
 
 Bowers, Juels, and Oprea~\cite{BJO09} extended the idea of Juels and Kaliski~\cite{JK07} and used error-correcting codes. The main difference in their construction is that they use the idea of an ``outer'' and an ``inner'' code (in  the same vein as concatenated codes), to get a good balance between the extra storage overhead and computational overhead in responding to the audits. 
Dodis, Vadhan, and Wichs \cite{DVW09} provided the first example of an unconditionally secure {$\por$} scheme, also constructed from an error-correcting code, with extraction performed through {\em list decoding} in conjunction with the use of an {\em almost-universal hash function}. They also give different constructions depending on the computational capabilities of the server. Paterson {\it et al.}~\cite{PSU13} studied $\por$ schemes in the setting of unconditional security, and showed some close connections to error-correcting codes.

There have been some other works that provide proof-of-storage. Ateniese {\it et al.}~\cite{AKK09} used {\em homomorphic identification schemes} to give  efficient proof-of-storage systems. Wang {\it et al.}~\cite{wang} gave the first privacy preserving public auditable proof-of-storage systems. We refer the readers to the survey by Kamara and Lauter~\cite{KL10} regarding the architecture of proof-of-storage systems. 


\paragraph{Distributed Cloud Computing.}
Proof-of-storage systems have been also studied in the setting where there is more than one server or more than one client. The first such setting was studied by Curtmola {\it et al.}~\cite{CKBA08}. They studied a multiple replica $\pdp$ system, which is the natural generalization of single server $\pdp$ system to $t$ servers. 

Bowers {\it et al.}~\cite{BJO09} introduced  a distributed system that they called {\sf HAIL}. Their system allows a set of provers to prove the integrity of a file stored by a client. The idea in {\sf HAIL} is to exploit the cross-prover redundancy. They considered an active and mobile adversary that can corrupt the whole set of provers. 

Recently, Ateniese {\it et al.}~\cite{ADDV12} considered the problem from the client side, where $n$ clients store their respective files on a single prover in a manner such that the verification of the integrity of a single client's file simultaneously gives  the integrity guarantee of the files of all the participating clients. They called such a system an {\em entangled cloud storage.}



\subsection{Comparison with Bowers, Juels, and Oprea}
The scheme of Bowers, Juels, and Oprea~\cite{BJO09} is closest to our work; however, there are some key differences. We enumerate some of them below:
\begin{enumerate}
  \item {The construction of Bowers, Juels, and Oprea~\cite{BJO09} is secure only in the computational setting, while we provide security in the setting of unconditional security.} 
   \item Bowers, Juels, and Oprea~\cite{BJO09} use various tools and algorithms to construct their systems, including error-correcting codes, pseudo-random functions, message authentication codes, and universal hash function families. On the other hand, we only use ramp schemes in our constructions, making our schemes easier to state and analyze.
   \item We consider two types of security guarantees -- the worst-case scenario and the average-case scenario (Bowers, Juels, and Oprea~\cite{BJO09} only consider worst-case scenario).
   \item The construction of Bowers, Juels, and Oprea~\cite{BJO09} only aims to protect  the integrity of the message, while we consider both the privacy and integrity of the message.
   \item We work under a stronger requirement than~\cite{BJO09} -- we require  extraction to succeed with probability equal to 1, whereas in \cite{BJO09}, extraction succeeds with probability close to 1, depending in part on properties of a certain class of hash functions used in the protocol.
\end{enumerate}

We use the term $\prover$ to identify any server that stores the file of a client. We use the term $\verifier$ for any entity that verifies whether the file of a client is stored properly or not by the server. We also assume that a file is composed of message blocks of an appropriate fixed length. If the file consists of single block, we simply call it the file. 

\section{Security Model of Multi-server $\por$ systems} \label{sec:model}
The essential components of  multi-server-$\por$ ($\mspor$) systems are natural generalizations of  single-server $\por$ systems. The first difference is that there are $\rho$ provers and the $\verifier$ might store different messages on each of them. Also, during an audit phase, the $\verifier$ can pick a subset of provers on which it runs the audits. The last crucial difference is that the $\ext$ has (black-box or non-black-box) access to a subset of proving algorithms corresponding to the provers that the $\verifier$ picked to audit. We detail them below for the sake of completeness.

Let $\prover_1, \ldots, \prover_\rho$ be a set of $\rho$ provers and let $\verifier$ be the verifier. 
The $\verifier$ has a message $m \in \cM$ from the message space $\cM$ which he redundantly encodes to $M_1, \ldots, M_\rho$. 
\begin{enumerate}
	\item In the keyed setting, the $\verifier$ picks $\rho$ different keys $(K_1, \ldots, K_\rho)$, one for each of the corresponding provers.
	\item The $\verifier$ gives $M_i$ to $\prover_i$. In the case of a keyed scheme,  $\prover_i$ may be also given an additional tag $S_i$ generated using the key, $K_i$, and $M_i$.
	\item  The $\verifier$ stores some sort of information (say a {\em fingerprint} of the encoded message) which allows him to verify the responses made by the provers. 
	\item On receiving the encoded message $M_i$, $\prover_i$ generates a proving algorithm $\cP_i$, which it uses to generate its responses during the auditing phase. 
	\item At any time, $\verifier$
picks an index $i$, where $1\leq i \leq \ell$, and engages in a challenge-response
protocol with $\prover_i$. In one  execution of challenge-response protocol, $\verifier$ picks a challenge $c$ and gives it to $\mathsf{Prover}_i$, and the prover responds with $\varrho$. The $\verifier$ then verifies the correctness of the response (based on its fingerprint).
	\item The success probability $\suc(\cP_i)$ is the probability, computed over all the challenges, with which the $\verifier$ accepts the response sent by $\prover_i$. 
	\item The $\ext$ is given a subset $S$ of the proving algorithms $\cP_1, \ldots, \cP_\rho$ (and in the case of a keyed scheme, the corresponding subset of the keys, $\{K_i : i \in S\}$), and outputs a message $\widehat{m}$. The $\ext$ succeeds if $\widehat{m}=m$.
\end{enumerate}

The above framework does not restrict any provers from interacting with other provers when they receive the encoded message. However, we assume that they do not interact {\em after} they have generated a proving algorithm. If we do not include this restriction, then it is not hard to see that  one cannot have any meaningful protocol. For example, if provers can interact after they receive the encoded message, then it is possible that one prover stores the entire message and the other provers just relay the challenges to this specific prover and relay back its response to the verifier. 

In contrast to a single-prover $\por$ scheme, there are two possible ways in which one can define the security of a multiple prover $\por$ system. We define them next.

The first security definition corresponds to the ``worst case" scenario and is the natural generalization of a single-server $\por$ system.
\begin{definition} \label{defn:worst}
	A $\rho$-prover $\mspor$ scheme is {\em $(\eta,\nu,\tau,\rho )$-threshold secure} if there is an $\ext$ which, when given any $\tau$ proving algorithms, say $\cP_{i_1}, \ldots \cP_{i_\tau}$, succeeds with probability at least $\nu$ whenever 
	\[  \suc(\cP_j) \geq \eta \qquad  \text{for all}~j \in I, \] where $I =\set{i_1, \ldots, i_\tau}$.
\end{definition}

We note that when $\rho =\tau=1$, we get a standard single-server $\por$ system. Moreover, the definition captures the worst-case scenario in the sense that  it only guarantees extraction if there exists a set of $\tau$ proving algorithms, all of which  succeed with high enough probability. 

The above definition requires that all the $\tau$ servers succeed with high enough probability. On the other hand, it might not be the case that all the proving algorithms of   the servers picked by the $\verifier$  succeed with the required probability. In fact, even verifying whether or not all the $\tau$ proving algorithms have high enough success probability to allow successful extraction might  be difficult (see, for example~\cite{PSU13} for more details about this). However, it is possible that some of the proving algorithms succeed with high enough probability to compensate for the failure of the rest of the proving algorithms. For instance, since the provers are allowed to interact before they specify their proving algorithms, it might be the case that the colluding provers decide to store most of the message on a single prover. In this case, even a weaker guarantee that the average success probability is high enough might be sufficient to guarantee a successful extraction. In other words, it is possible to state (and as we show in this paper, achieve) a security guarantee with respect to the average case success probability over all the proving algorithms.
\begin{definition} \label{defn:average}
A $\rho$-prover $\mspor$ scheme is {\em $(\eta,\nu,\rho )$-average secure} if the $\ext$ succeeds with probability at least $\nu$ whenever 
\[ \frac{1}{\rho}\sum_{i=1}^\rho \suc(\cP_i) \geq  \eta .\]
\end{definition}
Note that the average-case secure system reduces to the standard $\por$ scheme (with $\tau=\rho$) when $\rho =1$. The following example illustrates that average-case security is possible even when an $\mspor$ system is not possible as per Definition~\ref{defn:worst}.

\begin{example} \label{eq:average}
Suppose $\eta = 0.7, \nu = 0$ and $\rho = 3$.
Further, suppose that $\suc (\cP_1) = 0.9,~\suc(\cP_2) = 0.6$ and $\suc(\cP_3) = 0.6$.
Then the hypotheses of Definition~\ref{defn:worst} are
not satisfied for $\tau = 2$. So even if the $\mspor$
scheme is $(\eta,\nu,\tau,\rho)$-threshold secure,
we cannot conclude that the Extractor will succeed.
On the other hand, for the assumed success probabilities,
the hypotheses of Definition~\ref{defn:average} are satisfied.
Therefore, if the $\mspor$ scheme is 
$(0.7,\nu,\tau)$-average secure, 
the Extractor will succeed.
\end{example}


\paragraph{\scshape Privacy Guarantee.}
We mentioned at the start of this section that $\por$ systems were introduced and studied to give  assurance of the integrity of the data stored on remote storage. However, the confidentially aspects of data  have not been studied formally in the area of cloud-based $\por$ systems. There have been couple of {\it ad hoc} solutions that have been proposed in which the messages are encrypted and then stored on the cloud~\cite{CKBA08}. We believe that, in addition to the standard integrity requirement,  privacy of the stored data when multiple provers are involved is also an important requirement. We model the privacy requirement as follows:

\begin{definition} \label{defn:privacy}
	An $\mspor$ system is called $t$-private if no set $\cA$ of adversarial provers  of size at most $t$ learns anything about the message stored by the $\verifier$.
\end{definition}

Note that $t=0$ corresponds to the case when the $\mspor$ system does not provide any confidentiality to the message. The above definition captures the idea that, even if $t$ provers collude, they do not learn anything about the message. We remark that we can achieve confidentiality without encrypting the message by using secret sharing techniques.

\paragraph{Notation.}We fix the letter $m$ for the original message, $\mathcal{M}$ to denote the space from which the message $m$ is picked, and $M$ to denote the encoded message. 
We fix $\nu$ to denote the failure probability of the extractor and $\eta$  to denote the success probability of a proving algorithm. In this paper, we are mainly interested in the case when $\nu=0$ for both the worst-case and the average-case security. 
 We use $n$ to denote the number of  message blocks, assuming the underlying $\por$ system breaks the message into blocks. 

\section{Primitives Used in This Paper} \label{sec:prelim}

\subsection{Ramp Schemes} \label{sec:secret}
In our construction, we use a primitive related to secret sharing schemes known as {\em ramp schemes}. A  {\em secret sharing scheme} allows a trusted dealer to share a secret between $n$ players so that certain subsets of players can reconstruct the secret from the shares they hold~\cite{Blakley79,Shamir79}. 

It is well-known that the size of each player's share in a secret sharing scheme must be at least the size of the secret. If the secret that is to be shared is large, then this constraint can be very restrictive. Schemes for which we can get a certain form of trade-off between share size and security are known as {\em ramp schemes}~\cite{BM85}.

\begin{definition}\label{defn:rampscheme}
{\em (Ramp Scheme).} Let $\tau_1,\tau_2$, and $n$ be positive integers such that $\tau_1 < \tau_2 \leq n$. A {\em $(\tau_1, \tau_2, n)$-ramp scheme} is a pair of algorithms: $(\mathsf{ShareGen, Reconstruct})$ such that, on input a secret $\mathsf{S}$, $\mathsf{ShareGen}(\mathsf{\mathsf{S}})$ generates $n$ shares, one for each of the $n$ players, such that the following two properties hold: (i) Reconstruction: any subset of $\tau_2$ or more players can pool together their shares and use $\mathsf{Reconstruct}$ to compute the secret $\mathsf{S}$ from the shares that they collectively hold, and (ii) Secrecy: no subset of $\tau_1$ or fewer players can determine any information about the ${secret}~\mathsf{S} $.
\end{definition}

\begin{example} Suppose the dealer wishes to set up $(2,4,n)$-ramp scheme with the secret $(a_0,a_1)$. The dealer picks a finite field $\F_q$ with  $q >n$ such that $a_0,a_1 \in \F_q$. The dealer picks random elements $a_2,a_3$  independently from the field $\F_q$ and construct the following polynomial of degree $3$ over the finite field $\F_q$:  $f(x)=a_0 +a_1x+ a_2x^2 +a_3x^3$. The share for any player $\cP_i$ is generated by computing $s_i=f(i)$. It is easy to see that if two or fewer players come together, they do not learn any information about the secret, and if at least four players come together, they can use Lagrange's interpolation formula to compute the function $f$ as well as the secret. However, if  three players  pool together their shares, then they can learn some partial information about one of the other player's share. For concreteness, let $q ={17}$. Then $5a_1 \equiv 7s_3+ 9s_6 + s_{15} \mod 17$; therefore, players $\cP_3,\cP_6$, and $\cP_{15}$ can compute the value of $a_1$. 
\end{example}

For completeness, we review some of the basic theory concerning the construction of ramp schemes.
Linear codes have been used to construct ramp-schemes  for over thirty years since the work of McEliece and Sarwate~\cite{MS81}. We will consider a construction from an arbitrary code in  this paper. The following relation between an arbitrary code (linear or non-linear) and a ramp scheme is  was shown by Paterson and Stinson~\cite{PS13}.
\begin{theorem} \label{thm:PS13}
Let $\mathcal{C}$ be a code of length $N$, distance $\mathsf{d}$ and dual distance $\mathsf{d}^\bot$. Let $1 \leq \mathsf{s} <\mathsf{d}^\bot-2$. Then there is a $(\tau_1, \tau_2, {N}-\mathsf{s})$ ramp scheme 
, where $\tau_1=\mathsf{d}^\bot-\mathsf{s}-1$ and $\tau_2={N}-\mathsf{d}+1$.
\end{theorem}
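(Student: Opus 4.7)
The plan is to build the ramp scheme directly from the code $\cC$: use $\mathsf{s}$ designated coordinates of a random codeword to hide the secret $\mathsf{S}$ and hand out the remaining $N-\mathsf{s}$ coordinates, one per player, as shares. Fix once and for all an $\mathsf{s}$-element ``secret-position'' set $I_S \subseteq \{1,\dots,N\}$ and label the complementary $N-\mathsf{s}$ coordinates by the players. The sharing algorithm $\mathsf{ShareGen}(\mathsf{S})$ picks a codeword $c \in \cC$ uniformly at random subject to $c|_{I_S}=\mathsf{S}$ and outputs the remaining entries of $c$ as the shares. Well-definedness follows from the dual-distance property: since $\mathsf{s} \le \mathsf{d}^\bot - 1$, the projection of $\cC$ onto any $\mathsf{s}$ coordinates is \emph{balanced} (every tuple in the alphabet power has the same number of preimages in $\cC$), so every candidate secret has the same positive number of codeword completions.

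Reconstruction is the easy half and uses only the minimum distance. Any $\tau_2 = N-\mathsf{d}+1$ players collectively know that many coordinates of the hidden codeword $c$. Two distinct codewords of $\cC$ can agree in at most $N-\mathsf{d}$ positions (otherwise their Hamming distance would be less than $\mathsf{d}$), so any $N-\mathsf{d}+1$ coordinates uniquely identify $c$ inside $\cC$. Thus $\mathsf{Reconstruct}$ simply searches $\cC$ for the unique codeword consistent with the given shares and reads off $c|_{I_S}$.

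Privacy is where the dual distance does its real work. Take any adversarial set $A$ of at most $\tau_1 = \mathsf{d}^\bot - \mathsf{s} - 1$ players and let $T_A$ be their share-positions. Then $T = I_S \cup T_A$ has size at most $\mathsf{s} + \tau_1 = \mathsf{d}^\bot - 1$. The same balanced-projection consequence of the dual distance now says that, for $c$ drawn uniformly from $\cC$, the joint projection $c|_T$ is uniform on the alphabet tuples. Because a uniform distribution on $\Sigma^{I_S}\times\Sigma^{T_A}$ factorises, the conditional distribution of $c|_{T_A}$ given $c|_{I_S} = \mathsf{S}$ is uniform on $\Sigma^{T_A}$ and, crucially, independent of $\mathsf{S}$. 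Since $\mathsf{ShareGen}$ realises precisely this conditional distribution, $A$'s view carries no information about $\mathsf{S}$.

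The only nontrivial ingredient is the balanced-projection property invoked above. For a linear code it is immediate from the fact that a generator matrix restricted to any $\mathsf{d}^\bot -1$ columns is surjective, but for an \emph{arbitrary} (possibly non-linear) code it requires the Delsarte / MacWilliams framework: one defines $\mathsf{d}^\bot$ via the Krawtchouk transform of the distance distribution and derives equidistribution of small coordinate projections from the vanishing of the first $\mathsf{d}^\bot - 1$ dual-weight coefficients. This is where the hard work lies, and it is exactly the content taken from Paterson and Stinson~\cite{PS13}; granted it, the reconstruction and privacy arguments above complete the proof.
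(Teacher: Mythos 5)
Your proposal is correct and follows exactly the route the paper intends: the paper itself only states the construction (Construction~\ref{construction:ramp} --- a random codeword whose designated $\mathsf{s}$ coordinates carry the secret, with the remaining $N-\mathsf{s}$ coordinates as shares) and defers the proof to Paterson and Stinson~\cite{PS13}, and your reconstruction argument (any $N-\mathsf{d}+1$ coordinates determine a codeword) and privacy argument (projections onto at most $\mathsf{d}^\bot-1$ coordinates are balanced) are the standard ones underlying that construction. You also correctly isolate the one genuinely nontrivial ingredient for non-linear codes --- the orthogonal-array/balanced-projection consequence of dual distance --- which is precisely the content imported from~\cite{PS13}.
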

Here $\mathsf{s}$ is the {\em rate} of the ramp scheme. If $\mathbf{G}$ is a generator matrix of a code $C$ of dimension $k$, then $|C| = q^k \geq q^{\mathsf{d}^\bot-1}$. In other words, $k \geq {\mathsf d}^\bot-1.$

\begin{construction} \label{construction:ramp}
The construction of a ramp scheme from a code is as follows. Let $s$ and $\rho$ be  positive integers and let $(m_1, \ldots, m_{\mathsf{s}}) \in \F^{\mathsf{s}}$ be the message. Let $C$ be a code of length $n=\rho+{s}$ defined over a finite field $\F$. We also require that the first $\mathsf{s}$ entries of a codewords is the message to be encoded, i.e., the corresponding generator matrix is in  standard form. Select a random codeword $(\mathbf{c}_1=m_1, \ldots, \mathbf{c}_\mathsf{s}=m_\mathsf{s}, \mathbf{c}_{\mathsf{s}+1}, \ldots, \mathbf{c}_{\rho+\mathsf{s}}) \in C $, and define the shares as $(\mathbf{c}_{\mathsf{s}+1}, \ldots, \mathbf{c}_{\rho+\mathsf{s}})$. 
\end{construction}

\begin{example} \label{eg:rscode}
One can use a Reed-Solomon code to construct a ramp scheme~\cite{MS81}. Let $q$ be a prime and $1 \leq \mathsf{s} < t\leq n <q $. It is well known that, for a prime $q$,  there is an $[N,k,N-\tau+1]_q$ Reed-Solomon code with $\mathsf{d}^\bot=\tau+1$. This implies a $(\tau-\mathsf{s},\tau,N)$-ramp scheme over $\F_q$. 
\end{example}



\subsection{Single-prover $\por$-system}
We start by fixing some notation for {\sf PoR} schemes that we use throughout the paper. Let $\Gamma$ be the {\em challenge space} and $\Delta$ be the {\em response space}. We denote by $\gamma= |\Gamma |$ the size of a challenge space. Let $\cM^*$ be the space of all encoded messages. The {\em response function} $\rho : \mathcal{M}^* \times \Gamma \rightarrow \Delta $
computes the response $r = \rho(M,c)$ given the encoded message $M$ and the challenge $c$.

For an encoded message $M \in \mathcal{M}^*$, we define the {\em response vector} $r^M$ that contains all the  responses
to all possible challenges for the encoded message $M$. Finally, define the {\em response code}
 of the
scheme to be
\[ {\cR} = \{ r^M : M \in {\cM}^* \} .\]
The codewords in ${\cR}$ are just the response vectors that we defined above.
Paterson {\it et al.}~\cite{PSU13} proved the following result for a single-prover {\sf PoR} scheme.
\begin{theorem}   \label{thm:single}
	Suppose that $\cP$ is a proving algorithm for a {\sf PoR} scheme with response code $\cR$. If the success probability of the corresponding proving algorithm satisfies ${\sf succ}(\cP) \geq 1 - \widetilde{\mathsf{d}}/2\gamma$, where $\widetilde{\mathsf{d}}$ is the Hamming distance of the code $\cR$ and $\gamma$ is the size of the challenge space, then the extractor described in~\figref{extractor} always outputs the message $m$.
\end{theorem}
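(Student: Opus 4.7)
My plan is as follows. I presume the extractor in the referenced figure queries the proving algorithm $\cP$ on every challenge $c \in \Gamma$, assembles a candidate response vector $r^{\cP} \in \Delta^{\gamma}$, applies nearest-codeword decoding to $r^{\cP}$ against the response code $\cR$, and then inverts the ``response encoding'' $M \mapsto r^M$ to recover $m$. Correctness of extraction therefore reduces to two steps: (a)~show that $r^{\cP}$ lies well inside the decoding radius of $r^M$, and (b)~show that recovering $r^M$ determines $M$ (and hence $m$) uniquely.

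For step (a), the key observation is that $\suc(\cP)$ is, by definition, the probability taken over a uniform challenge $c \in \Gamma$ that the verifier accepts, which happens exactly when $\cP(c) = \rho(M,c)$. Therefore the number of coordinates on which $r^{\cP}$ differs from $r^M$ equals $\gamma\bigl(1 - \suc(\cP)\bigr)$, and the hypothesis gives
\[
\dist(r^{\cP}, r^M) \;=\; \gamma\bigl(1 - \suc(\cP)\bigr) \;\leq\; \widetilde{\mathsf{d}}/2 .
\]
Since $\cR$ has minimum Hamming distance $\widetilde{\mathsf{d}}$, any other codeword of $\cR$ lies at distance strictly greater than $\widetilde{\mathsf{d}}/2$ from $r^{\cP}$, so nearest-codeword decoding returns $r^M$. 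For step (b), note that the response code $\cR$ was defined coordinate-by-coordinate from the encoded message $M$ via $\rho$, so different encoded messages yield different response vectors (else extraction is information-theoretically impossible in the first place); hence inverting the encoding map recovers $M$, and then $m$.

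The main obstacle is the tightness of the decoding bound in step (a), in particular the boundary case $\dist(r^{\cP}, r^M) = \widetilde{\mathsf{d}}/2$, which is possible when $\widetilde{\mathsf{d}}$ is even and $\suc(\cP)$ achieves the lower bound with equality. At that radius, unique nearest-codeword decoding is not automatic, since another codeword could in principle also lie at distance $\widetilde{\mathsf{d}}/2$ from $r^{\cP}$ (its distance from $r^M$ would then be at most $\widetilde{\mathsf{d}}$, consistent with the minimum distance). I would resolve this by one of three routes, checking the figure to see which the authors intend: (i)~argue that any codeword at distance exactly $\widetilde{\mathsf{d}}/2$ from $r^{\cP}$ must coincide with $r^M$ by a parity or structural argument specific to $\cR$; (ii)~allow the extractor to produce the (necessarily small) list of codewords within radius $\widetilde{\mathsf{d}}/2$ and use the verifier's stored fingerprint to select $r^M$; or (iii)~observe that because $\gamma(1-\suc(\cP))$ is an integer, the inequality $\leq \widetilde{\mathsf{d}}/2$ in fact implies $\leq \lfloor \widetilde{\mathsf{d}}/2 \rfloor \leq (\widetilde{\mathsf{d}}-1)/2$ whenever $\widetilde{\mathsf{d}}$ is odd, and for even $\widetilde{\mathsf{d}}$ the number of errors is at most $\widetilde{\mathsf{d}}/2$, which a standard unique-decoding ball-packing argument still handles since $r^M$ is the strictly closest codeword. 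The rest of the proof is then a direct verification that the extractor's output equals $m$.
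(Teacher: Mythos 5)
The paper does not actually prove this theorem itself; it imports it from Paterson, Stinson and Upadhyay \cite{PSU13}, and the extractor in Figure~\ref{fig:extractor} is exactly the one you presume: query $\cP$ on every $c\in\Gamma$, form $R'$, and take the nearest codeword of $\cR$. Your steps (a) and (b) are the intended argument, and your identification of $\dist(R',r^M)=\gamma\bigl(1-\suc(\cP)\bigr)$ as the crux is correct. So in substance your proposal matches the source argument.

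The one point to tighten is the boundary case you yourself flag, because your resolution (iii) contains an error. When $\widetilde{\mathsf{d}}$ is even and $\dist(R',r^M)=\widetilde{\mathsf{d}}/2$ exactly, it is \emph{not} true that ``$r^M$ is the strictly closest codeword'': a codeword $r^{M'}$ with $\dist(r^M,r^{M'})=\widetilde{\mathsf{d}}$ can lie at distance exactly $\widetilde{\mathsf{d}}/2$ from $R'$ as well (put the $\widetilde{\mathsf{d}}/2$ errors on coordinates where $r^M$ and $r^{M'}$ differ, in the direction of $r^{M'}$), so the minimiser in step 2 of the extractor need not be unique and ``always outputs $m$'' can fail in that corner. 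The correct repair is the integrality observation you already make, applied honestly: since $\gamma\bigl(1-\suc(\cP)\bigr)$ is an integer, the hypothesis gives at most $\lfloor\widetilde{\mathsf{d}}/2\rfloor$ errors, which for odd $\widetilde{\mathsf{d}}$ is $\le(\widetilde{\mathsf{d}}-1)/2$ and yields unique nearest-codeword decoding by the triangle inequality; for even $\widetilde{\mathsf{d}}$ the statement should be read with strict inequality (equivalently, at most $\widetilde{\mathsf{d}}/2-1$ errors), which is how the bound is used elsewhere in the paper and is the standard form of the unique-decoding radius. Your option (ii) would also work but changes the extractor, and option (i) assumes structure on $\cR$ that is not given. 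With that adjustment, your step (b) is fine: distinct encoded messages have response vectors at distance at least $\widetilde{\mathsf{d}}>0$, so $M\mapsto r^M$ is injective and $\widehat{m}=e^{-1}(\widehat{M})=m$.
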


\begin{figure}[t]
\begin{center}
\fbox{
    \begin{minipage}{5.5in}
    \begin{enumerate}
\item On input $\cP$, compute the vector $R' = (r'_c : c \in \Gamma)$,
where $r'_c = \cP(c)$ for all $c \in \Gamma$ (i.e., for every $c$, $r'_c$ is
the response computed by $\cP$ when it is given
the challenge $c$).
\item Find $\widehat{M} \in \mathcal{M}^*$ so that $\mathsf{dist} (R', r^{\widehat{M}})$ is minimised.
\item Output $\widehat{m} = e^{-1}(\widehat{M})$.
\end{enumerate}
\end{minipage}
  }
\caption{$\ext$  for~\thmref{single}}\label{fig:extractor}
\end{center}
\end{figure}

If we cast this in the security model defined in~\secref{introduction} (Definition~\ref{defn:worst} and Definition~\ref{defn:average}), then we have the following theorem.
\begin{theorem} \label{thm:PSU13} 
	Suppose that $\cP$ is a proving algorithm for a single server $\por$ scheme with response code $\cR$. Then there exists a $(1 - \widetilde{\mathsf{d}}/2\gamma, 0,1,1)\mbox{-}\mspor$ system, where $\widetilde{\mathsf{d}}$  is the Hamming distance of the code $\cR$ and $\gamma$ is the size of the challenge space $\Gamma$. 
\end{theorem}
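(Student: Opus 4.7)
The plan is to observe that Theorem~\ref{thm:PSU13} is essentially a rephrasing of Theorem~\ref{thm:single} in the language of the MsPoR security model with the degenerate choice of parameters $\rho=\tau=1$. First I would unpack Definition~\ref{defn:worst} in this case: an MsPoR scheme is $(\eta,\nu,1,1)$-threshold secure if there exists an extractor which, on input the unique proving algorithm $\cP$ satisfying $\suc(\cP)\geq \eta$, fails with probability at most $\nu$ (reading $\nu$ as the failure probability, per the notation paragraph). With the target parameters $\eta=1-\widetilde{\mathsf{d}}/2\gamma$ and $\nu=0$, the requirement is precisely that an extractor exists which \emph{always} recovers $m$ under the single-prover success hypothesis.

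Second, I would exhibit the extractor directly: it is the one in Figure~\ref{fig:extractor}. Given oracle access to $\cP$, it queries all $\gamma$ challenges to assemble the response vector $R'=(r'_c : c\in\Gamma)$, decodes $R'$ to the nearest codeword $r^{\widehat{M}}\in\cR$, and outputs $\widehat{m}=e^{-1}(\widehat{M})$. Since the scheme has a single prover there is no subset-of-keys issue in the MsPoR formalism, so this extractor fits the template of Definition~\ref{defn:worst} with $\tau=\rho=1$.

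Third, I would invoke Theorem~\ref{thm:single} as a black box. Its hypothesis, $\suc(\cP)\geq 1-\widetilde{\mathsf{d}}/2\gamma$, guarantees that the produced response vector $R'$ differs from the true response codeword $r^M$ in fewer than $\widetilde{\mathsf{d}}/2$ positions, so nearest-neighbour decoding in $\cR$ uniquely returns $r^M$ and hence $\widehat{m}=m$. This matches the $\nu=0$ (zero failure) requirement, certifying the system as $(1-\widetilde{\mathsf{d}}/2\gamma,0,1,1)$-threshold secure.

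The only real obstacle is definitional rather than mathematical: one must confirm that the phrase ``succeeds with probability at least $\nu$'' in Definition~\ref{defn:worst} is to be read via the notation paragraph as ``fails with probability at most $\nu$'', so that $\nu=0$ corresponds to certainty of extraction. Once this convention is fixed, Theorem~\ref{thm:PSU13} is an immediate corollary of Theorem~\ref{thm:single} and the reduction is essentially a one-line translation between the two formalisms.
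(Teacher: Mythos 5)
Your proposal is correct and matches the paper's treatment: the paper gives no separate proof of Theorem~\ref{thm:PSU13}, presenting it as an immediate recasting of Theorem~\ref{thm:single} (with the extractor of Figure~\ref{fig:extractor}) into the $(\eta,\nu,\tau,\rho)$ formalism with $\rho=\tau=1$. Your reading of $\nu=0$ as zero failure probability (so the extractor always succeeds) is also the intended one, per the paper's notation paragraph.
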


Paterson {\it et al.}~\cite{PSU13} gave a modified version of the Shacham-Waters scheme which they showed is secure in the unconditional security setting.  They argued that, in the setting of  unconditionally security, any keyed $\por$ scheme should be considered to be secure when the success probability of the proving algorithm $\cP$, denoted by $\succp$, is defined as the average success probability of the prover  over all possible keys (\thmref{SW08}).  The same reasoning extends to  $\mspor$ systems. Therefore, in what follows and in~\secref{optimization}, when we say a scheme is an $(\eta,\nu,\tau,\rho)$-threshold secure scheme, the term $\eta$ is the average success probability where the average is computed over all possible keys.  We denote the average success probability of a prover $\cP$ over all possible keys by $\suc_\mathsf{avg}(\cP)$. 
Paterson {\it et al.}~\cite{PSU13} showed the following:
\begin{theorem}
\label{SWsucc.thm} \label{thm:SW08}
Let $\F_q$ be the underlying field and let $\ell \geq 1$ be the hamming weight of the challenges made by the $\verifier$. Let $\mathsf{d}$ be the hamming distance of the space of encoded message, $\mathcal{M}^*$.  Suppose that
\begin{equation}
\label{eq2.6} \mathsf{succ}_{\mathsf{avg}}(\mathcal{P}) \gtrsim 1 - \frac{\mathsf{d}^*(q-1)}{2 \gamma q},
\end{equation}
where $\gamma=q^n$ is the size of the challenge space and $\mathsf{d}^*$ is given by 
\begin{align}
{\mathsf{d}^*} &\approx \binom{n}{\ell} (q-1)^{\ell} - \binom{n-\mathsf{d}}{\ell} (q-1)^{\ell} -
\sum_{w \geq 1} \binom{\mathsf{d}}{w} \binom{n-\mathsf{d}}{\ell - w} \frac{(q-1)^{\ell}}{q}. \label{eq:d}
\end{align}

 Then there exists an $\ext$ that always output $\widehat{m} = m$.
\end{theorem}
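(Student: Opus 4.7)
The plan is to embed the Shacham-Waters scheme into the generic response-code framework of \thmref{PSU13} and then explicitly compute the Hamming distance of its response code. Because the scheme is keyed, following the convention noted just above the theorem statement I would work with $\suca(\cP)$ throughout, taking as the effective distance of the code the quantity $\widetilde{\mathsf{d}}(M,M') := \E_K\bigl[\dist(r^{M,K}, r^{M',K})\bigr]$, where $r^{M,K}$ denotes the honest response vector on the weight-$\ell$ challenge space under encoded message $M$ and key $K$. With this quantity identified, the deterministic-extraction claim will reduce mechanically to the minimum-distance decoding guarantee of \thmref{single}.

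The heart of the argument is a counting lemma showing $\widetilde{\mathsf{d}}(M,M') \gtrsim \mathsf{d}^{*}\,(q-1)/q$ whenever $\dist(M,M') \geq \mathsf{d}$. Let $T = \mathrm{supp}(M-M')$, so $|T|\geq \mathsf{d}$. The Shacham-Waters response to a challenge $c$ is an $\F_q$-linear combination of message blocks together with a key-dependent authenticator; subtracting the honest responses for $M$ and $M'$ under a common key cancels the authenticator by linearity and leaves the scalar $\langle c,M-M'\rangle$, so the two honest responses coincide iff $\langle c, M-M'\rangle = 0$. I would then partition the $\binom{n}{\ell}(q-1)^\ell$ weight-$\ell$ challenges into those supported entirely outside $T$ (there are $\binom{n-\mathsf{d}}{\ell}(q-1)^\ell$ of them, and they always force agreement) and those meeting $T$ in exactly $w \geq 1$ positions (there are $\binom{\mathsf{d}}{w}\binom{n-\mathsf{d}}{\ell-w}(q-1)^\ell$ for each $w$, and approximately a $1/q$-fraction of them satisfy $\langle c_T,(M-M')|_T\rangle = 0$, since that inner product is close to uniform over $\F_q$). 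Subtracting the agreement count from the total reproduces exactly the expression for $\mathsf{d}^{*}$ in \eqref{eq:d}.

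With this distance bound in hand, the extractor is the obvious key-aware analogue of the one in \figref{extractor}: on input $\cP$ and the known key $K$, form $R' = (\cP(c))_c$ and output the encoded message whose honest response vector minimizes Hamming distance to $R'$. Substituting $\widetilde{\mathsf{d}} \gtrsim \mathsf{d}^{*}(q-1)/q$ into the decoding bound of \thmref{single} and rearranging yields precisely the threshold in \eqref{eq2.6}, under which $\widehat{M}=M$ always and hence $\widehat{m}=m$; the $(q-1)/q$ factor emerges from the distance calculation, while $\gamma = q^n$ arises because the success probability is averaged over the full challenge space $\F_q^n$.

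The main obstacle I expect is making the ``approximately $1/q$'' step rigorous: one must verify that, for the unconditionally secure variant of the Shacham-Waters scheme used in~\cite{PSU13}, the values $\langle c_T,(M-M')|_T\rangle$ over varying $c_T\in(\F_q^*)^w$ really do land in each residue class of $\F_q$ in nearly equal numbers, once one restricts to nonzero challenge entries. This is a careful but routine counting exercise that explains the ``$\approx$'' in \eqref{eq:d}; the remainder of the proof is a direct reduction to \thmref{single}.
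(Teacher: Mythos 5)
First, a point of reference: the paper does not actually prove \thmref{SW08} --- it is quoted from \cite{PSU13} without proof --- so there is no in-paper argument to compare against. Judged on its own terms, your reconstruction has the right skeleton: embed the scheme in the response-code framework, count the weight-$\ell$ challenges on which two encoded messages at distance at least $\mathsf{d}$ give different honest responses, and finish with minimum-distance decoding as in \thmref{single}. Your three-way partition of the challenges does reproduce \eqref{eq:d} correctly, and the ``approximately $1/q$'' step is indeed the source of the $\approx$ there.

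The genuine gap is in where the factor $(q-1)/q$ in \eqref{eq2.6} comes from. Your own counting shows that, for \emph{every} fixed key $K$, the honest response vectors $r^{M,K}$ and $r^{M',K}$ disagree in exactly the $\mathsf{d}^*$ positions where $\langle c, M-M'\rangle \neq 0$ (the tag component agrees precisely when the $\mu$-component does, since the authenticator is an affine function of $\mu$ under a fixed key). Hence $\E_K\bigl[\dist(r^{M,K},r^{M',K})\bigr] = \mathsf{d}^*$, not $\mathsf{d}^*(q-1)/q$, and the factor cannot ``emerge from the distance calculation.'' It comes instead from the link between $\suca(\cP)$ and the number of challenges on which $\cP$'s fixed response vector actually differs from the honest one: a response with an incorrect $\mu$-component is still \emph{accepted} for exactly one of the $q$ keys consistent with the prover's view, so if $f$ is the fraction of challenges answered with a wrong $\mu$, then $1-\suca(\cP) \geq f\cdot\tfrac{q-1}{q}$, rather than $1-\suc(\cP)=f$ as in the unkeyed setting. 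Combining $f \leq (1-\suca(\cP))\tfrac{q}{q-1}$ with the decoding radius $f < \mathsf{d}^*/(2\gamma)$ is what produces the threshold \eqref{eq2.6}. As written, your reduction treats $1-\suca(\cP)$ as the fractional Hamming distance itself, which would yield the threshold $1-\mathsf{d}^*/(2\gamma)$ --- a strictly weaker hypothesis under which the conclusion is false: a prover answering a fraction $f$ slightly above $\mathsf{d}^*/(2\gamma)$ of challenges incorrectly but tagging consistently with a single guessed key attains $\suca(\cP)=1-f(q-1)/q > 1-\mathsf{d}^*/(2\gamma)$ while defeating nearest-neighbour decoding. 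This accounting over keys is exactly why \cite{PSU13} state the result for $\suca$ rather than $\suc$, and it is the missing step in your argument.
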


\section{Worst-case $\mspor$ Based on Ramp Scheme} \label{sec:worst}
In this section, we give our first construction that achieves a worst-case security guarantee. The idea is to use a $(\tau_1, \tau_2, \rho )$-ramp scheme in conjunction with a single-server-$\por$ system. The intuition behind the construction is that the underlying $\por$ system along with the ramp scheme provides the retrievability guarantee and the ramp scheme provides the confidentiality guarantee. 

We first present a schematic diagram of the working of an $\mspor$ in~\figref{msporscheme} and illustrate the scheme with the help of following example. We provide the details of the construction in~\figref{ramp}.

\begin{example}
	Let $\rho=6$. Suppose the $\verifier$ and the provers use a $\por$ system $\Pi$. Let the message to be stored be $(15,3)$. The $\verifier$ picks $q=17$ and chooses two random elements $1,2 \in \F_{17}$ to construct a polynomial $f(x)=15 + 3 x + x^2 + 2x^3$. 
	The $\verifier$ picks an encoding function $e(\cdot)$ and stores $e(4)$ on $\prover_1$, $e(7)$ on $\prover_2$,  $e(2)$ on $\prover_3$,   $e(1)$ on $\prover_4$,    $e(16)$ on $\prover_5$, and      $e(8)$ on $\prover_6$. 
     
     Let us suppose that the $\por$ scheme is  such that, for  a random challenge vector of dimension $\rho$, say $\begin{pmatrix} 5, & 2, & 9, & 13, & 5, & 6\end{pmatrix}$, where the $i$-th entry would be a challenge to $\prover_i$, the corresponding responses of the provers form a vector $\begin{pmatrix} 3, & 14, & 1, & 13, & 12, & 14  \end{pmatrix}$, where the $\text{\sc Resp}_i$ is the correct response of the $\prover_i$. In other words, on challenge $5$ to $\prover_1$, the correct response is $3$, and so on. 
     
     During the audit phase,  the $\verifier$ picks any four provers and sends the  challenges to the provers. Once all the provers that he chose reply, he verifies their response. For example, suppose the $\verifier$ picks $\prover_1,~\prover_3,~\prover_4,$ and $\prover_6$. The $\verifier$ then  sends the challenge $5$ to $\prover_1$, $9$ to $\prover_3$, $13$ to $\prover_4$, and $6$ to $\prover_6$. If it gets the response $3,~1,~13,$ and $14$ back, it accepts; otherwise, it rejects.
\end{example}

\usetikzlibrary{arrows,shapes,positioning,shadows,trees}
\tikzstyle{l} = [draw, -latex',thick]
\tikzstyle{myarrows}=[line width=1mm,draw=blue,-triangle 45,postaction={draw, line width=2mm, shorten >=3mm, -}]

\tikzset{
  basic/.style  = {draw, text width=2cm, drop shadow, font=\sffamily, rectangle,fill=green!30},
  root/.style   = {basic, text width=11cm, rounded corners=2pt, thin, align=center, fill=green!30},
  level 2/.style = {basic, rounded corners=6pt, thin,align=center, fill=green!60,
                   text width=8em,node distance=3cm},
  level 3/.style = {basic, thin, align=center, fill=red!30, text width=6.5em,node distance=3cm}
}
\begin{figure}
\begin{center}
\begin{tikzpicture}[node distance=2.8cm, auto,
  level 1/.style={sibling distance=6cm}, {node distance=4 cm}
    emph/.style={edge from parent/.style={red,very thick,draw}},
  >=latex]

\node[root] {Message in the form of $\mathsf{s}$ bits}
  child {node[level 2] (c1) {Share $1$ of $\mathsf{Ramp}$ scheme}}
  child {node[level 2] (c2) {Share $i$ of $\mathsf{Ramp}$ scheme}}
  child {node[level 2] (c3) {Share $\rho$ of $\mathsf{Ramp}$ scheme}};

\begin{scope}[every node/.style={level 3}]
\node [below of = c1] (c11) {Block stored on $\prover_1$};
\node [below of = c2] (c21) {Block stored on  $\prover_i$};
\node [below of = c3] (c31) {Block stored on $\prover_\rho$};
\end{scope}

\draw[myarrows](c1) to node {$\Pi$}(c11);
\draw[myarrows](c2) to node {$\Pi$}(c21);
\draw[myarrows](c3) to node {$\Pi$}(c31);

\path (c1) -- (c2) node [midway] {$\ldots$};
\path (c2) -- (c3) node [midway] {$\ldots$};
\path (c11) -- (c21) node [midway] {$\ldots$};
\path (c21) -- (c31) node [midway] {$\ldots$};


\end{tikzpicture}
\caption{Schematic View of $\mathsf{Ramp}\mbox{-}\mspor$ System} \label{fig:msporscheme}
\end{center}
\end{figure}

We  note  one of the possible practical deployments of the $\mathsf{Ramp}\mbox{-}\mspor$ stated in~\figref{ramp}. Let $m$ be a message that consists of $sk$ elements from $\F_q$. The $\verifier$ breaks the message into $k$ blocks of length $s$ each. It then invokes a $(\tau_1,\tau_2,n)\mbox{-}\mathsf{Ramp}$ scheme on each of these blocks to generate $n$ shares of each of the $k$ blocks. The $\verifier$ then runs a $\por$ scheme $\Pi$ to compute the encoded message to be stored on each of the servers by encoding its $k$ shares, one corresponding to each of the $k$ blocks.

\begin{figure} [t]
\begin{center}
\fbox{
\begin{minipage}[l]{5.5in}
\medskip
{
{\bf Input:} The  $\verifier$ gets the message $m$ as input. Let $\prover_1, \ldots, \prover_\rho$ be the set of $\rho$ provers. 
\begin{description}
	\item [Initialization Stage.] The $\verifier$ performs the following steps for storing the message
	\begin{enumerate}
		\item The $\verifier$ chooses a single-server $\por$ system $\Pi$ and a $(\tau_1,\tau_2,\rho )$-ramp scheme $\mathsf{Ramp}=(\mathsf{ShareGen, Reconstruct})$. \label{step:chose}
		\item The $\verifier$ computes $\rho$ shares of the message using the ramp scheme $(m_1, \ldots, m_\rho) \gets \mathsf{ShareGen}(m)$.
		\item The $\verifier$ runs $\rho$ independent copies of $\Pi$ and generates the encoded share $M_i=e(m_i) \in \mathcal{M}$ corresponding to each $1 \leq i \leq \rho$. 
		\item The $\verifier$ stores $M_i$ on $\prover_i$.
	\end{enumerate}
	\item [Challenge Phase:] During the audit phase, $\verifier$ picks a prover, $\prover_i$, and runs the challenge-response protocol of $\Pi$ with $\prover_i$.
 \end{description}
}
\end{minipage}
}\caption{ Worst-case Secure $\mspor$ Using a Ramp-scheme ($\mathsf{Ramp}\mbox{-} \mspor$).} \label{fig:ramp} 
\end{center}
\end{figure}

We prove the following security result for the $\mspor$ scheme presented in~\figref{ramp}.
\begin{theorem} \label{thm:ramp}
	Let $\Pi$ be an $(\eta,0,1,1)\mbox{-}$ threshold-secure $\mspor$ with a response code of Hamming distance $\widetilde{\mathsf{\mathsf{d}}}$ and the size of challenge space $\gamma$. Let $\mathsf{Ramp}=(\mathsf{ShareGen, Reconstruct})$ be a $(\tau_1,\tau_2,\rho )$-ramp scheme. Then  $\mathsf{Ramp\mbox{-}}\mspor$, defined in~\figref{ramp}, is an $\mspor$ system with the following properties:
	\begin{enumerate}
		\item Privacy: $\mathsf{Ramp\mbox{-}}\mspor$ is $\tau_1$-private.
		\item Security: $\mathsf{Ramp\mbox{-}}\mspor$ is $(\eta,0,\tau_2,\rho )$-threshold secure, where $\eta=1-\widetilde{\mathsf{d}}/2\gamma$.
	\end{enumerate}
\end{theorem}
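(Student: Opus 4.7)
The plan is to handle the two claims separately, since each reduces cleanly to one property of the underlying primitive.

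For the privacy claim, I would argue that any coalition $\cA$ of at most $\tau_1$ provers learns no more about $m$ than it would from the corresponding $|\cA|$ ramp-scheme shares. Indeed, in the initialization stage each prover $\prover_i$ receives only $M_i = e(m_i)$, where $e$ is the (public) encoding function of the single-server $\por$ scheme $\Pi$ and $m_i$ is the $i$-th ramp share. Since $e$ is deterministic and applied independently to each share, the joint view of the coalition is a deterministic function of the set of shares $\{m_i : i \in \cA\}$. By the secrecy property of the $(\tau_1,\tau_2,\rho)$-ramp scheme, any set of $\tau_1$ or fewer shares gives no information about $m$, so the same holds for the coalition's view. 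Hence $\mathsf{Ramp\mbox{-}}\mspor$ is $\tau_1$-private as in Definition~\ref{defn:privacy}.

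For the security claim, fix any set $I = \{i_1,\ldots,i_{\tau_2}\}$ of indices whose proving algorithms satisfy $\suc(\cP_{i_j}) \geq \eta = 1 - \widetilde{\mathsf{d}}/(2\gamma)$. I would build the multi-server extractor by running the single-server extractor of Theorem~\ref{thm:single} independently on each $\cP_{i_j}$. Because the initialization stage runs $\rho$ independent copies of $\Pi$ and stores $M_{i_j} = e(m_{i_j})$ on $\prover_{i_j}$, each $\cP_{i_j}$ is a proving algorithm for $\Pi$ with encoded message $M_{i_j}$; its response code is precisely the response code of $\Pi$, so the same parameters $\widetilde{\mathsf{d}}$ and $\gamma$ apply. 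The hypothesis $\suc(\cP_{i_j}) \geq 1 - \widetilde{\mathsf{d}}/(2\gamma)$ then satisfies the premise of Theorem~\ref{thm:single}, and the single-server extractor outputs the share $m_{i_j}$ with probability $1$. Having recovered $\tau_2$ correct shares $\{m_{i_j} : j = 1,\ldots,\tau_2\}$, I invoke $\mathsf{Reconstruct}$ from the $(\tau_1,\tau_2,\rho)$-ramp scheme, which by the reconstruction property returns $m$ exactly. Thus extraction succeeds with probability $1$, matching $\nu = 0$ in Definition~\ref{defn:worst}.

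The hard part is largely conceptual rather than technical: one must be careful that the per-prover extraction really is independent of what the other (possibly adversarial) provers do, and that the extraction event holds with probability $1$ rather than only with high probability, so that taking the intersection of the $\tau_2$ good extraction events does not degrade the overall guarantee. Both points follow from the fact that Theorem~\ref{thm:single} guarantees \emph{always}-correct recovery once the threshold on $\suc(\cP)$ is met, and from the independence of the $\rho$ copies of $\Pi$ in the construction.
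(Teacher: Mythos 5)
Your proposal is correct and follows essentially the same route as the paper: privacy is reduced directly to the secrecy property of the ramp scheme, and security is obtained by running the single-server extractor of Theorem~\ref{thm:single} independently on each of the $\tau_2$ chosen provers and then invoking $\mathsf{Reconstruct}$ on the recovered shares. Your write-up is in fact slightly more careful than the paper's on the two points you flag (probability-$1$ extraction and per-prover independence), but the argument is the same.
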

\begin{proof}
	 The privacy guarantee of $\mathsf{Ramp\mbox{-}}\mspor$ is straightforward from the privacy property of the underlying ramp scheme. 
	 
	 For the security guarantee, we need to demonstrate an $\ext$ that outputs a message $\widehat{m} =m$ if at least $t$ servers succeed with probability at least $\eta=1-\widetilde{\mathsf{d}}/2 \gamma$. The description of our $\ext$ is as follows:
	\begin{enumerate}
		\item $\ext$ chooses $\tau_2$ provers and runs the extraction algorithm of the underlying single-server $\por$ system on each of these provers. In the end, it outputs $\widehat{M}_{i_j}$ for the corresponding provers $\prover_{i_j}$. It defines $\mathcal{S} \gets \{\widehat{M}_{i_1}, \ldots, \widehat{M}_{i_{\tau_2}} \}$.
		\item $\ext$ invokes the $\mathsf{Reconstruct}$ algorithm of the underlying ramp scheme with the  elements of $\mathcal{S}$. It outputs whatever $\mathsf{Reconstruct}$ outputs.
	\end{enumerate}	  
	 Now note that the $\verifier$ interacts with every $\prover_i$ independently. We know from the security of the underlying single-server-$\por$ scheme (\thmref{single}) that there is an extractor that always outputs the encoded message whenever $\suc(\cP_i) \geq \eta.$ Therefore, if all the $\tau_2$ chosen proving algorithms succeed with probability at least $\eta$, then the set $\mathcal{S}$  will have  $\tau_2$ correct shares. From the correctness of the $\mathsf{Reconstruct}$ algorithm, we know that the message output in the end by $\ext$ will be the message $m$. 
\end{proof}

As a special case of the above, we get a simple $\mspor$ system which uses a {\em replication code}. A replication code has an encoding function $\enc: \Lambda \rightarrow \Lambda^\rho$ such that $\enc(x) = (\underbrace{x,x,\ldots, x}_{\rho \text{ times}})$ for any $x \in \Lambda$. This is the setting considered by Curtmola {\it et al.}~\cite{CKBA08}.

We call a $\mathsf{Ramp\mbox{-}}\mspor$ scheme based on a replication code a $\mathsf{Rep\mbox{-}}\mspor$. The schematic description of the scheme is presented in~\figref{repscheme} and the scheme is presented in~\figref{average}. Since a $\rho$-replication code is a $(0,1,\rho)$-ramp scheme, a simple corollary to~\thmref{ramp} is the following.
\begin{corollary} \label{cor:rep}
	Let  $\Pi$ be a $(\eta,0,1,1)\mbox{-}\mspor$ system with a response code of Hamming distance $\widetilde{\mathsf{d}}$ and the size of challenge space $\gamma$. Then  $\mathsf{Rep\mbox{-}}\mspor$, formed by instantiating $\mathsf{Ramp}\mbox{-}\mspor$ with the replication code based Ramp scheme, is a $\mspor$ system with the following properties:
	\begin{enumerate}
		\item Privacy: It is $0$-private.
		\item Security: It is $(\eta,0,1,\rho )$-threshold secure, where $\eta=1-\widetilde{\mathsf{d}}/2\gamma$.
	\end{enumerate}
\end{corollary}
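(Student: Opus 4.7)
The plan is to derive the corollary by specializing Theorem~\ref{thm:ramp} to the case where the underlying ramp scheme arises from a $\rho$-fold replication code, so the main task is to verify that the parameters match up.

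First I would confirm that the encoding $\enc(x) = (x, x, \ldots, x) \in \Lambda^\rho$ indeed instantiates a $(0, 1, \rho)$-ramp scheme in the sense of Definition~\ref{defn:rampscheme}. For reconstruction, any single player's share equals the secret itself, so $\mathsf{Reconstruct}$ is just the projection onto any coordinate and the threshold $\tau_2 = 1$ is attained. For secrecy, the requirement with $\tau_1 = 0$ is that no subset of size at most $0$ learns anything about the secret; since the only such subset is the empty set, this is vacuously true. Hence the replication code is a legitimate (albeit degenerate) $(0, 1, \rho)$-ramp scheme.

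Next I would plug $(\tau_1, \tau_2) = (0, 1)$ into the conclusion of Theorem~\ref{thm:ramp}. The privacy clause immediately gives $\tau_1$-privacy, which is $0$-privacy; the security clause gives $(\eta, 0, \tau_2, \rho)$-threshold security, which is exactly $(\eta, 0, 1, \rho)$-threshold security, with the same $\eta = 1 - \widetilde{\mathsf{d}}/2\gamma$ inherited from the single-server PoR scheme $\Pi$ via Theorem~\ref{thm:single}. Concretely, the $\mathsf{Rep\mbox{-}}\mspor$ verifier stores the same encoded message $M = e(m)$ on every $\prover_i$, so the extractor simply picks any one proving algorithm that meets the success threshold and runs the single-server extractor from Figure~\ref{fig:extractor} on it.

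There is essentially no obstacle in this argument; the only point one has to be careful about is the degenerate boundary case $\tau_1 = 0$ of the ramp-scheme definition, and the fact that ``any single share suffices'' really is the content of the replication code. I would also mention, as done in the paragraph preceding the corollary, that this recovers the replicated-storage setting studied by Curtmola \emph{et al.}~\cite{CKBA08}, so the corollary can be viewed as a sanity check that our general framework subsumes the earlier construction.
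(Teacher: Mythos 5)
Your proposal is correct and matches the paper's own (implicit) argument: the paper derives the corollary exactly by observing that a $\rho$-replication code is a $(0,1,\rho)$-ramp scheme and then specializing Theorem~\ref{thm:ramp} to $\tau_1=0$, $\tau_2=1$. Your additional checks of the degenerate boundary case are careful but not a departure from the paper's route.
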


The issue with $\mathsf{Rep}\mbox{-}\mspor$ scheme is that there is no confidentiality of the file. We will come back to this issue later in~\secref{extension}.

\section{Average-case Secure $\mspor$ System} \label{sec:average}
In general, it is not possible to verify with certainty whether the success probability of a proving algorithm is above a certain threshold; therefore, in that case, it is unclear how  the $\ext$ would know which proving algorithms to use for  extraction as described in~\secref{worst}. In this section, we analyze the average-case security properties of the replication code based scheme, $\mathsf{Rep}\mbox{-}\mspor$, described in the last section. This allows us an alternative guarantee that allows successful extraction where the extractor need not worry whether a certain proving algorithm succeeds with high enough probability or not.

\begin{figure}[t]
\begin{center}
\begin{tikzpicture}[node distance=4cm, auto,
  level 1/.style={sibling distance=6cm}, {node distance=4 cm}
  emph/.style={edge from parent/.style={red,very thick,draw}},
  >=latex
]

\node[root] {Message $m$}
  child {node[level 2] (c1) {M=e(m)} edge from parent node[above left] {$e$}}
  child {node[level 2] (c2) {M=e(m)} edge from parent node[left] {$e$}}
  child {node[level 2] (c3) {M=e(m)} edge from parent node[above right] {$e$}};

\begin{scope}[every node/.style={level 3}]
\node [below of = c1] (c11) {$M$ stored on $\prover_1$};
\node [below of = c2] (c21) {$M$ stored on  $\prover_i$};
\node [below of = c3] (c31) {$M$ stored on $\prover_\rho$};
\end{scope}

\draw[myarrows](c1) to  (c11);
\draw[myarrows](c2) to  (c21);
\draw[myarrows](c3) to  (c31);

\path (c1) -- (c2) node [midway] {$\ldots$};
\path (c2) -- (c3) node [midway] {$\ldots$};
\path (c11) -- (c21) node [midway] {$\ldots$};
\path (c21) -- (c31) node [midway] {$\ldots$};


\end{tikzpicture}
\caption{Schematic View of $\mathsf{Rep\mbox{-}}\mspor$} \label{fig:repscheme}
\end{center}
\end{figure}


Recall the scenario introduced in Example~\ref{eq:average}.
Here we assumed $\suc(\cP_1) = 0.9,~
\suc(\cP_2) = 0.6$ and $\suc(\cP_3) = 0.6$ for three provers.
Suppose that successful extraction for a particular prover $\cP_i$ 
requires $\suc(\cP_2)\geq 0.7$. Then extraction would work on 
only one of these three provers. On the other hand,
suppose we have an average-case secure $\mspor$ in which 
extraction is successful if the average success probability 
of the three provers is at least $0.7$. Then the success 
probabilities assumed above would be sufficient to 
guarantee successful extraction.
\begin{figure} [t]
\begin{center}
\fbox{
\begin{minipage}[l]{5.5in}
\medskip
{
{\bf Input:} The verifier $\verifier$ gets the message $m$ as input. Let $\prover_1, \ldots, \prover_\rho$ be the set of $\rho$ provers. 
\begin{description}
	\item [Initialization Stage.] The $\verifier$ performs the following steps for storing the message
	\begin{enumerate}
		\item The $\verifier$ chooses a single-server $\por$ system $\Pi$.
		\item Using the encoding scheme of $\Pi$, the $\verifier$ generates the encoded message $M=e(m) \in \mathcal{M}$ for $1 \leq i \leq n$. 
		\item The $\verifier$ stores the message $M$ on all $\prover_i$ for $1 \leq i \leq n$.
	\end{enumerate}
	\item [Challenge Phase:] During the audit phase, $\verifier$  runs the challenge-response protocol of $\Pi$ independently on each server.
 \end{description}
}
\end{minipage}
}\caption{Average case Secure $\mspor$  ($\mathsf{Rep\mbox{-}}\mspor$).} \label{fig:average} 
\end{center}
\end{figure}

\begin{theorem}
	Let  $\Pi$ be a single-server $\por$ system with a response code of Hamming distance $\widetilde{\mathsf{d}}$ and the size of challenge space $\gamma$. Then  $\mathsf{Rep\mbox{-}}\mspor$, defined in~\figref{average}, is an $\mspor$ system with the following properties:
	\begin{enumerate}
		\item Privacy: $\mathsf{Rep\mbox{-}}\mspor$ is $0$-private.
		\item Security: $\mathsf{Rep\mbox{-}}\mspor$ is $(1-\widetilde{\mathsf{d}}/2\gamma,0,\rho )$-average secure.
	\end{enumerate}
\end{theorem}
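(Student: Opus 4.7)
The plan is to handle the two claims separately, leveraging Theorem~\ref{thm:single} (the single-server extraction theorem) for the security part.

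For the privacy claim, I would observe that in $\mathsf{Rep\mbox{-}}\mspor$ every prover $\prover_i$ is given the identical string $M = e(m)$, which is a deterministic function of $m$. Thus a single adversarial prover already holds the full encoded message, so we cannot hope for $t$-privacy with $t \geq 1$; the protocol is only $0$-private. This is immediate and requires no further argument.

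The main content is the average-case security claim. The key observation is an averaging argument: if $\frac{1}{\rho}\sum_{i=1}^{\rho} \suc(\cP_i) \geq \eta := 1 - \widetilde{\mathsf{d}}/2\gamma$, then at least one index $i^\ast$ satisfies $\suc(\cP_{i^\ast}) \geq \eta$, since the maximum of a finite collection is at least its average. Because each prover stores the same encoded message $M = e(m)$ under the same instance of $\Pi$, all provers share a common response code $\cR$ of Hamming distance $\widetilde{\mathsf{d}}$ and the same challenge space $\Gamma$ of size $\gamma$. Hence Theorem~\ref{thm:single} applies verbatim to $\cP_{i^\ast}$.

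I would then construct the extractor as follows. Given black-box access to $\cP_1, \ldots, \cP_\rho$, the $\ext$ first queries each $\cP_i$ on every $c \in \Gamma$ (this is exactly what the single-server extractor in Figure~\ref{fig:extractor} does anyway), thereby obtaining the full response vector $R'_i$ and, as a byproduct, the exact value of $\suc(\cP_i)$ by counting matches against the fingerprint. It then selects an index $i^\ast$ with $\suc(\cP_{i^\ast})$ maximal, runs the procedure of Figure~\ref{fig:extractor} on $\cP_{i^\ast}$ to obtain $\widehat{M}$, and outputs $\widehat{m} = e^{-1}(\widehat{M})$. By the averaging observation, $\suc(\cP_{i^\ast}) \geq \eta$, so Theorem~\ref{thm:single} guarantees $\widehat{m} = m$.

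The proof is essentially a one-line reduction to the single-server case; there is no substantive obstacle. The only subtle point is that the extractor does not need to know the individual success probabilities in advance: because the single-server extractor already enumerates all challenges, computing $\suc(\cP_i)$ exactly comes for free, so picking $i^\ast$ requires no additional assumption. A cleaner but less efficient variant would simply run the single-server extractor on every prover and return the candidate $\widehat{m}_i$ produced from the prover with the highest empirical success; correctness is identical.
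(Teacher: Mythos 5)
Your privacy argument matches the paper's. For security, however, you take a genuinely different route, and it contains a gap at its one load-bearing step: the selection of $i^\ast$. Your extractor must \emph{identify} a prover with $\suc(\cP_{i^\ast})\geq \eta$, and you do this by ``counting matches against the fingerprint.'' But in the model of Section~2 the $\ext$ is given only the proving algorithms (plus the keys in the keyed setting); the fingerprint is held by the $\verifier$ and is not part of the extractor's input, and $\suc(\cP_i)$ is defined via the verifier's acceptance predicate, which the extractor cannot evaluate. This is not a pedantic point: the paper's stated motivation for the average-case notion (opening of Section~5) is precisely that one cannot in general certify which proving algorithms exceed the threshold, so an extractor whose correctness hinges on ranking the provers by success probability defeats the purpose of the definition. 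Your ``cleaner variant'' inherits the same problem, since choosing ``the prover with the highest empirical success'' again presupposes the ability to verify responses. The averaging observation itself ($\max_i \suc(\cP_i)$ is at least the average, so some prover clears the threshold) is correct but, on its own, only yields an existential statement: one of the $\rho$ candidates $\widehat{M}_1,\dots,\widehat{M}_\rho$ equals $M$, with no way to tell which.

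The paper avoids this entirely: its extractor queries every $\cP_i$ on all of $\Gamma$, concatenates the responses into a single vector $R$ of length $\rho\gamma$, and performs nearest-neighbour decoding against the repetition response code $\set{(r,\dots,r): r\in\cR}$, which has minimum distance $\rho\widetilde{\mathsf{d}}$. The hypothesis $\bar{\eta}\geq 1-\widetilde{\mathsf{d}}/2\gamma$ says exactly that $\dist(R,(r^M,\dots,r^M))=\rho\gamma(1-\bar{\eta})\leq \rho\widetilde{\mathsf{d}}/2$, so decoding succeeds without the extractor ever needing to verify a single response or compare provers. Your argument can be repaired in the same spirit: instead of picking the ``best'' prover, take the $\rho$ candidates produced by running the single-server extractor on each prover and output the candidate $\widehat{M}_i$ minimising $\sum_{j=1}^{\rho}\dist\bigl(R_j, r^{\widehat{M}_i}\bigr)$; your averaging argument guarantees the true $M$ is among the candidates, and the distance bound above guarantees it wins. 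At that point, though, you have essentially reconstructed the paper's proof.
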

\begin{proof}
Since the message is stored in its entirety on each of the servers, there is no confidentiality. 
	 
	 For the security guarantee, we need to demonstrate an $\ext$ that outputs a message $\widehat{m} =m$ if average success probability of all the provers is at least $\eta=1-\widetilde{\mathsf{d}}/2\gamma$. The description of our $\ext$ is as follows:
	\begin{enumerate}
	\item For all $1 \leq i \leq n$, use $\cP_i$ to compute the vector $R_i = (r_c^{(i)} : c \in \Gamma)$, 
where $r_c^{(i)} = \cP_i(c)$ for all $c \in \Gamma$ (i.e., for every $c$, $r_c^{(i)}$ is
the response computed by $\cP_i$ when it is given
the challenge $c$),  
	\item Compute $R$ as a concatenation of $R_1, \ldots, R_\rho$ and  find $\widehat{M}:=\paren{\widehat{M}_1, \ldots, \widehat{M}_\rho}$ so that $\mathsf{dist} (R, r^{\widehat{M}})$ is minimized, and 
	\item Compute $m=e^{-1}(\widehat{M})$. 	
	\end{enumerate}	  
Now note that $\verifier$ interacts with each $\prover_i$ independently and $\ext$ uses the challenge-response step with independent challenges. Let $\eta_1, \ldots, \eta_\rho$ be the success probabilities of the $\rho$ proving algorithms. Let $\bar{\eta}$ be the average success probability over all the servers and challenges. Therefore, $\bar{\eta} = \rho^{-1}\sum_{i=1}^\rho \eta_i$. 


First note that, in the case of~\figref{average}, the response code is of the form \[ \set{(\underbrace{r,r,\ldots, r}_{\rho \text{ times}}): r \in \cR}. \] 

It is easy to see that the distance of the response code is $\rho  \widetilde{\mathsf{d}}$ and the length of a challenge is $\rho  \gamma$. From the definition of the extractor and~\thmref{single}, it follows that the extraction succeeds if 
\[ \frac{\eta_1 + \ldots + \eta_\rho}{\rho} = \bar{\eta} \geq 1 - \frac{\widetilde{\mathsf{d}}}{2 \gamma}. \]
\end{proof}

\subsection{Hypothesis Testing for $\mathsf{Rep}\mbox{-}\mspor$}

For the purposes of auditing whether a file is being stored appropriately, it is necessary to have a mechanism for determining whether the success probability of a prover is sufficiently high.  In the case of replication code based $\mspor$ with worst-case security, we are interested in the  success probabilities of individual provers, and the analysis can be carried out as detailed in Paterson {\it et al.}~\cite{PSU13}.  In the case of $\mathsf{Rep}\mbox{-}\mspor$, however, we wish to determine whether the {\em average} success probability of the set of provers $\{\cP_1,\cP_2,\dotsc,\cP_\rho\}$ is at least $\eta$. This amounts to distinguishing the null hypothesis
\begin{align*}
H_0: \mathsf{avg}\mbox{-}\suc(\cP_i)&<\eta \intertext{from the alternative hypothesis}
H_1:\mathsf{avg}\mbox{-}\suc(\cP_i)&\geq\eta.
\end{align*}

Suppose we send $c$ challenges to each server.  If a given server $\cP_i$ has success probability $\suc(\cP_i)$, then the number of correct responses received follows the binomial distribution ${\sf B}(c,\suc(\cP_i))$.  If the success probabilities $\suc(\cP_i)$ were the same for each server, then the sum of the number of successes over all the servers would also follow a binomial distribution.  However, we are also  interested in the case in which these success probabilities differ, in which case the total number of successes follows a {\em poisson-binomial distribution,} which is more complicated to work with.  In order to establish a test that is conceptually and computationally easier to apply, we will instead rely on the observation that, in cases where the average success probability is high enough to permit extraction,  the failure rates of the servers are relatively low.  

For a given server $\cP_i$, let $f_i=1-\suc(\cP_i)$ denote the probability of failure.  For $r$ challenges, the number of failures follows the binomial distribution ${\sf B}(c,f_i)$.  Provided that $r$ is sufficiently large and $f_i$ is sufficiently low, then ${\sf B}(c,f_i)$ can be approximated by the {\em poisson distribution} ${\sf Pois}(cf_i)$.  The poisson distribution ${\sf Pois}(\lambda)$ is used to model the scenario where discrete events are occurring independently within a given time period with an expected rate of $\lambda$ events during that period.  The probability of observing $k$ events within that period is given by $$P(k)=\frac{e^ {-\lambda}\lambda^k}{k!}.$$ The mean and the variance of ${\sf Pois}(\lambda)$ is equal to $\lambda$.  For our purposes, the advantage of using this approximation is the fact that the sum of $\rho$ independent variables following the poisson distributions ${\sf Pois}(\lambda_1),{\sf Pois}(\lambda_2),\dotsc ,{\sf Pois}(\lambda_\rho)$ is itself distributed according to the poisson distribution ${\sf Pois}(\lambda_1+\lambda_2 + \dotsc +\lambda_\rho)$, even when the $\lambda_i$ all differ.  In the case where the average failure probability is low, the distribution ${\sf Pois}(c(f_1+f_2+\dotsc + f_\rho))$ should provide a reasonable approximation to the actual distribution of the total number of failed challenges.


\begin{example}
To demonstrate the appropriateness of the Poisson approximation for this application, suppose we have five servers, whose failure probabilities are expressed as $\mathbf{f}=(f_1,f_2,\dotsc,f_{5})$.  Let $t$ be the number of trials per server and $b$ the total number of observed failures out of the $5 t$ trials.  Table~\ref{table:poision} give both the exact cumulative probability $\p [B\leq b]$ of observing up to $b$ failures, and the Poisson approximation $\p_{\sf Pois}[B\leq b]$ of this cumulative probability, for a range of values for $\mathbf{f}$.  

\begin{longtable} {|cccc|}
\caption[Effect of Approximation by Poisson Distribution]{Comparison Between Exact Cumulative Probability and Approximation by Poisson Distribution}  \label{table:poision} \\

\multicolumn{4}{c}{$\mathbf{f}=(0.1,0.1,0.1,0.1,0.1)$}\\
\hline

$t$ & $b$ & $ \p [B\leq b]$ & $\p_{\sf Pois} [B\leq b]$\\
\hline \endfirsthead

\multicolumn{4}{c}%
{{\bfseries \tablename\ \thetable{} -- continued from previous page}} \\
\hline \multicolumn{1}{|c|}{\textbf{Time (s)}} &
\multicolumn{1}{c|}{\textbf{Triple chosen}} &
\multicolumn{1}{c|}{\textbf{Other feasible triples}} \\ \hline 
\endhead

\hline \multicolumn{4}{|r|}{{Continued on next page $\ldots$}} \\ \hline
\endfoot

\endlastfoot

\hline
\multicolumn{4}{|c|}%
{{\bfseries \tablename\ \thetable{} --- continued from previous page}} \\ \hline
  \endhead
 $200$ & $5$ & $2.556545692\times10^{-38}$ & $3.261456422\times10^{-36} $ \\
$200$ & $10$ & $1.450898832\times10^{-32}$ & $1.137687971\times10^{-30} $ \\
$200$ & $50$ & $5.995167631\times10^{-9}$ & $2.401592276\times10^{-8} $ \\
$200$ & $100$ & $0.5265990813$ & $0.5265622074$\\
$100$ & $0$ & $1.322070819\times10^{-23}$ & $1.928749864\times10^{-22} $ \\
$100$ & $5$ & $6.272915577\times10^{-17} $ & $5.567756307\times10^{-16} $ \\
$100$ & $10$ & $ 1.135691814\times10^{-12}$ & $ 6.450152972\times10^{-12} $ \\
$100$ & $15$ & $1.662665039\times10^{-9} $ & $6.357982164\times10^{-9} $\\
$100$ & $20$ & $4.557480806\times10^{-7} $ & $0.000001235187232  $\\
$200$ & $0$ & $ 1.747871252\times10^{-46}$ & $3.720076039\times10^{-44} $\\
$200$ & $5$ & $ 2.556545692\times10^{-38}$ & $3.261456422\times10^{-36} $\\
$200$ & $10$ & $ 1.450898832\times10^{-32}$ & $1.137687971\times10^{-30} $\\
$200$ & $15$ & $ 6.757345217\times10^{-28}$ & $3.340076418\times10^{-26} $\\
$200$ & $20$ & $ 5.962487876\times10^{-24}$ & $1.905558774\times10^{-22} $\\
$500$ & $20$ & $1.240463044\times10^{-84} $ & $1.084188102\times10^{-79} $\\
$500$ & $25$ & $3.140367419\times10^{-79} $ & $1.697380630\times10^{-74} $\\
$500$ & $30$ & $ 2.935666094\times10^{-74}$ & $ 9.912214279\times10^{-70} $ \\
$500$ & $35$ & $1.193158517\times10^{-69} $ & $2.542280876\times10^{-65} $\\
$500$ & $40$ & $2.369596756\times10^{-65} $ & $3.218593843\times10^{-61} $\\
\hline

\multicolumn{4}{c}{$\mathbf{f}=(0.01,0.01,0.01,0.01,0.01)$}\\
\hline
$t$ & $b$ & $\p [B\leq b]$ & $\p_{\sf Pois}[B\leq b]$\\
\hline
$200$ & $5$ & $0.06613951161$ & $0.06708596299$\\
$200$ & $10$ & $0.5830408032$ & $0.5830397512$\\
$200$ & $20$ & $0.9985035184$ & $0.9984117410$\\
$200$ & $50$ & $\approx1$ & $\approx1$\\
\hline
\multicolumn{4}{c}{$\mathbf{f}=(0.2,0.01,0.02,0.03,0.04)$}\\
\hline
$t$ & $b$ & $\p [B\leq b]$ & $ \p_{\sf Pois}[B\leq b]$\\
\hline
$200$ & $5$ & $9.651421837\times10^{-22}$ & $6.180223643\times10^{-20}$\\
$200$ & $10$ & $5.539867010\times10^{-17}$ & $1.744235672\times10^{-15}$\\
$200$ & $20$ & $0.09020056729$ & $0.1076778797$\\
$200$ & $50$ & $0.9999999198$ & $0.9999991415$\\
\hline
\multicolumn{4}{c}{$\mathbf{f}=(0.01,0.01,0.03,0.04,0.05)$}\\
\hline
$t$ & $b$ & $\p [B\leq b]$ & $ \p_{\sf Pois}[B\leq b]$\\
\hline
$200$ & $5$ & $8.312224722\times10^{-8}$ & $1.196952269\times10^{-7}$\\
$200$ & $10$ & $0.00006809921297$ & $0.00008550688580$\\
$200$ & $20$ & $0.06901537242$ & $0.07274102693$\\
$200$ & $50$ & $0.9999582547$ & $0.9999397284$\\
\hline

\multicolumn{4}{c}{$\mathbf{f}=(0.1,0.1,0.1,0.1,0.1)$}\\
\hline





\hline
$t$ & $b$ & $ \p [B\leq b]$ & $\p_{\sf Pois} [B\leq b]$\\
 \hline 

$20$ & $0$ & $0.00002656139888 $ & $0.00004539992984 $ \\
$20$ & $5$ & $0.05757688648 $ & $0.06708596299 $ \\
$20$ & $10$ & $0.5831555123 $ & $0.5830397512 $ \\
$20$ & $15$ & $0.9601094730 $ & $0.9512595983 $ \\
$20$ & $20$ & $0.9991924263 $ & $0.9984117410 $ \\
$40$ & $0$ &  $7.055079108\times10^{-10}$ & $2.061153629\times10^{-9}$ \\
$40$ & $5$ & $0.00003871193246 $ & $ 0.00007190884076$ \\
$40$ & $10$ & $ 0.008071249954$ & $ 0.01081171886 $ \\
$40$ & $15$ & $0.1430754340 $ & $ 0.1565131351 $ \\
$40$ & $20$ & $0.5591747822 $ & $  0.5590925860$ \\
$100$ & $20$ & $4.557480806\times10^{-7}$ & $0.000001235187232 $ \\
$100$ & $25$ & $0.00003540113222 $ & $0.00007160717427 $ \\
$100$ & $30$ & $ 0.001002549708$ & $0.001594027332 $ \\
$100$ & $35$ & $0.01231948910 $ & $0.01621388016 $ \\
$100$ & $40$ & $0.07508928967 $ & $0.08607000083 $ \\

$20$ & $0$ & $0.3660323413 $ & $ 0.3678794412 $ \\
$20$ & $5$ & $ 0.9994654657$ & $ 0.9994058153$ \\
$20$ & $10$ & $0.9999999939 $ & $0.9999999900 $ \\
$20$ & $15$ & $1.000000000 $ & $1.000000000 $ \\
$20$ & $20$ & $1.000000000 $ & $ 1.000000000$ \\
$40$ & $0$ & $ 0.1339796748$ & $ 0.1353352833$ \\
$40$ & $5$ & $0.9839770930 $ & $  0.9834363920$ \\
$40$ & $10$ & $ 0.9999931182$ & $ 0.9999916922$ \\
$40$ & $15$ & $ 0.9999999996 $ & $   1.000000000$ \\
$40$ & $20$ & $ 0.9999999999 $ & $ 1.000000000$ \\
$100$ & $20$ & $ 0.9999999367$ & $ 0.9999999198$ \\
$100$ & $25$ & $ 0.9999999999$ & $ 1.000000001$ \\
$100$ & $30$ & $ 0.9999999999$ & $ 1.000000001$ \\
$100$ & $35$ & $ 0.9999999999$ & $ 1.000000001$ \\
$100$ & $40$ & $ 0.9999999999$ & $ 1.000000001$ \\
\hline
\multicolumn{4}{c}{$\mathbf{f}=(0.02,0.0075,0.0075,0.0075,0.0075)$}\\
\hline
$t$ & $b$ & $\p[B\leq b]$ & $\p_{\sf Pois}[B\leq b]$\\
\hline
$20$ & $0$ & $0.08936904038$ & $ 0.09536916225$ \\
$20$ & $5$ & $0.9712600336$ & $ 0.9672561739$ \\
$20$ & $10$ & $ 0.9999843669$ & $ 0.9999642885$ \\
$20$ & $15$ & $  0.9999999995$ & $ 0.9999999958$ \\
$20$ & $20$ & $  1.000000000$ & $ 1.000000000$ \\
$40$ & $0$ & $ 0.007986825382$ & $ 0.009095277109$ \\
$40$ & $5$ & $   0.6699740391$ & $ 0.6684384858$ \\
$40$ & $10$ & $  0.9927425867$ & $  0.9909776597 $ \\
$40$ & $15$ & $  0.9999835852$ & $   0.9999661876$ \\
$40$ & $20$ & $ 0.9999999935$ & $ 0.9999999715$ \\
$100$ & $20$ & $ 0.9999999935$ & $ 0.9999999715$ \\
$100$ & $25$ & $  0.9999999998$ & $ 1.000000001$ \\
$100$ & $30$ & $ 0.9999999998 $ & $ 1.000000001$ \\
$100$ & $35$ & $ 0.9999999998 $ & $ 1.000000001$ \\
$100$ & $40$ & $ 0.9999999998 $ & $ 1.000000001$ \\
\hline

\end{longtable}
\end{example}

As an example of using the given formula to calculate a confidence interval, suppose we do 200 trials on each of five servers (so there are 1000 trials in total) and we observe 50 failures in total.  Then the resulting confidence interval is $[0,63.29)$.  Suppose we wish to know whether the success probability is at least $\eta=0.9$.  We have $(1-0.9)\times 1000=100$.  This is outside of that interval, and hence we conclude there is enough evidence to reject $H_0$ at the $95\%$ significance level.  However, to test whether the success probability was greater than $0.95$ we see that $(1-0.95)\times1000=50$.  Since $50$ lies within the interval, we conclude there is insufficient evidence to reject $H_0$ at the $95\%$ significance level.




Let $b$ denote the number of incorrect responses we have received from the $c \rho$ challenges given to the provers.  Suppose that $H_0$ is true, so that the expected number of failures is at least $\eta \rho c$.  Based on our approximation, the probability that the number of failures is at most $b$ is at most
\begin{align*}
\sum_{i=0}^b \frac{e^{-\eta \rho c}(\eta \rho c)^i}{i!}.
\end{align*}
If this probability is less than $0.05$, we reject $H_0$ and accept the alternative hypothesis.  However, if the probability is greater than $0.05$, then there is not enough evidence to reject $H_0$ at the $5\%$ significance level, and so we continue to suspect that the file is not being stored appropriately.

We can express this test neatly using a {\em confidence interval}.  We define a $95\%$ upper confidence bound by
\begin{align*}
\lambda_U=\inf\left\{\lambda\Bigg|\sum_{i=0}^b \frac{e^{-\lambda }\lambda^i}{i!}<0.05\right\}.
\end{align*}
This represents the smallest parameter choice for the Poisson distribution for which the probability of obtaining $b$ or fewer incorrect responses is less than $0.05$.  Then $[0,\lambda_U)$ is a $95\%$ confidence interval for the mean number of failures, so we reject $H_0$ whenever $\eta n r$ lies outside this interval.   The value of $\lambda_U$ can be determined easily by exploiting a connection with the chi-squared distribution \cite{Ulm90}: we have that 
\[ \sum_{i=0}^b \frac{e^{-\lambda }\lambda^i}{i!}={\rm Pr}(\chi^2_{2b+2}>2\lambda),\]
 and so the appropriate value of $\lambda_U$ can readily be obtained from a table for the chi-squared distribution.


\section{Optimization Using the Keyed Shacham-Waters Scheme} \label{sec:optimization}
In the last three sections, we gave constructions of $\mspor$ scheme using ramp-schemes, linear secret-sharing schemes, replication codes, and a single-prover-$\por$ system. In this section, we show a specific instantiation of our scheme using the keyed-scheme of Shacham and Waters~\cite{PSU13,SW08} for a single server $\por$ system. 


\subsection{Extension of the Keyed Shacham-Waters Scheme to $\mspor$} \label{sec:extension}
If we instantiate the  $\mathsf{Rep}\mbox{-}\mspor$ scheme (described in~\secref{worst}) with the modified Shacham-Waters scheme of Paterson {\it et al.}~\cite{PSU13}, then we need one key that consists of $n+1$ values in $\F_q$. However, in this case, we do not have any privacy. In particular, we have the following extension of~\corref{rep}. 
\begin{corollary} \label{cor:first}
	Let  $\Pi$ be an $(\eta,0,0,1)\mbox{-}\por$ system of Shacham and Waters~\cite{SW08} with a response code of Hamming distance $\widetilde{\mathsf{d}}$ and the size of challenge space $\gamma$ (where $\widetilde{\mathsf d}$ is given by~\eqnref{d}). Then  $\mathsf{Rep}\mbox{-}\mspor$ instantiated with the Shacham-Waters scheme is an $\mspor$ system with the following properties:
	\begin{enumerate}
		\item Privacy: It is $0$-private.
		\item Security: It is $\paren{\eta,0,1,\rho}$-threshold secure, where $\eta=1 - \frac{\widetilde{\mathsf{d}}(q-1)}{2 \gamma q}$.
		\item Storage Overhead: $\verifier$ needs to store $n+1$ field elements and every $\prover_i$ needs to store $2n$ field elements.
	\end{enumerate}
\end{corollary}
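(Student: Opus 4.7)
The plan is to derive this corollary as a direct instantiation of Corollary~\ref{cor:rep} with $\Pi$ taken to be the keyed Shacham--Waters PoR of Paterson, Stinson and Upadhyay, and to read off the three claimed properties by combining the generic $\mathsf{Rep}\mbox{-}\mspor$ analysis with the success bound of Theorem~\ref{SWsucc.thm}. Since each step is essentially a substitution, I expect no major obstacle; the only point that requires care is justifying that the verifier may reuse a single key across all $\rho$ provers without losing the threshold guarantee.

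The privacy claim is immediate. In the $\mathsf{Rep}\mbox{-}\mspor$ construction of Figure~\ref{fig:average}, every prover receives the full Shacham--Waters encoded message $M=e(m)$ together with the authentication tags, so a single prover already has everything needed to recover $m$ (given the key). Hence the scheme cannot hide $m$ from even one corrupted prover, and we get the best possible statement: $0$-private.

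For security, I would invoke Corollary~\ref{cor:rep} directly, but replace the generic distance bound $1-\widetilde{\mathsf{d}}/(2\gamma)$ by the Shacham--Waters average-key bound of Theorem~\ref{SWsucc.thm}. Concretely, the extractor picks any one prover $\prover_i$ whose proving algorithm satisfies $\suc_{\mathsf{avg}}(\cP_i)\geq \eta$ and runs the single-server Shacham--Waters extractor (Figure~\ref{fig:extractor}) on $\cP_i$ using the key that the verifier used to generate $M_i=e(m)$. By Theorem~\ref{SWsucc.thm}, whenever
\[ \suc_{\mathsf{avg}}(\cP_i) \geq 1 - \frac{\widetilde{\mathsf{d}}(q-1)}{2\gamma q}, \]
with $\widetilde{\mathsf{d}}$ given by~\eqnref{d}, the extractor outputs $\widehat{m}=m$ with probability $1$. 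Since only one prover is needed and the replication guarantees all $\rho$ provers see the same encoded message, this is precisely $(\eta,0,1,\rho)$-threshold security. The subtle point here is that the \emph{same} Shacham--Waters key is reused for all $\rho$ provers; this is harmless because the threshold definition only requires extraction from a single prover's proving algorithm, and the single-server Shacham--Waters extractor depends only on the one key corresponding to that prover.

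The storage claim is obtained by inspecting the Shacham--Waters encoding. The secret key consists of $n$ pseudorandom field elements $\alpha_1,\dotsc,\alpha_n \in \F_q$ (one per block position) together with one additional element for the tag-generation seed, totalling $n+1$ field elements; because all provers receive identical encodings, this one key suffices regardless of $\rho$. Each prover stores the $n$ message blocks of $M$ and the $n$ corresponding authenticators, giving $2n$ field elements per prover. Combining the three observations gives the corollary.
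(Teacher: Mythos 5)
Your proposal is correct and follows exactly the route the paper takes: its entire proof of this corollary is ``combine Theorem~\ref{thm:SW08} with Corollary~\ref{cor:rep},'' which is precisely your substitution of the Shacham--Waters average-key success bound into the generic $\mathsf{Rep}\mbox{-}\mspor$ guarantee, with the privacy and storage claims read off by inspection. The only nitpick is descriptive: in the paper's version of the keyed Shacham--Waters scheme the key is $(a, b_1,\ldots,b_n)$ with tags $S[j]=b_j+aM[j]$, so it is one multiplier $a$ plus $n$ offsets $b_j$ rather than $n$ per-block multipliers plus a seed, but the count of $n+1$ field elements is unaffected.
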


\begin{proof}
	The results follow by combining~\thmref{SW08} with~\corref{rep}. 
\end{proof}

The issue with the $\mathsf{Rep}\mbox{-}\mspor$ scheme is that there is no confidentiality of the file. In what follows, we improve the privacy guarantee of the  $\mspor$ scheme described above. Our starting point would be an instantiation of the $\mathsf{Ramp}\mbox{-}\mspor$ scheme, defined in~\figref{ramp}, with the Shacham-Waters scheme. We then reduce the storage on the $\verifier$ through two steps.

\subsection{Optimized Version of the Multi-server Shacham-Waters Scheme}
We follow two steps to get a  $\mspor$ scheme based on the Shacham-Waters scheme with a reduced storage requirement for the $\verifier$, while improving  the confidentiality guarantee.
\begin{enumerate}
	\item In the first step, stated in~\thmref{first}, we improve the privacy guarantee of the $\mspor$ scheme to get a $\tau_1$-private $\mspor$ scheme (where $\tau_1 <\rho$ is an integer). The $\verifier$ in this scheme has to store $\rho(n+1)$ field elements. When the underlying field is $\F_q$,  the verifier has to store $\rho(n +1) \log q$ bits.
	\item In the second step, stated in~\thmref{optimized_storage}, we reduce the storage requirement of the $\verifier$ from $\rho (n+1)$ to $\tau_1(n +1)$ field elements for some integer $\tau_1 < \rho$ without affecting the privacy guarantee. When the underlying field is $\F_q$,  the verifier has to store $\tau_1(n +1) \log q$ bits.
\end{enumerate}

\paragraph{Step 1.} To improve the privacy guarantee  of~\corref{first} to say, $\tau_1$-private (as per Definition~\ref{defn:privacy}),  we  use a  $\mathsf{Ramp}\mbox{-}\mspor$ scheme and $\rho$ different  keys, where each key consists of $n+1$ values in $\F_q$. The $\verifier$ generates $\rho$ shares of every message block using a ramp scheme, then encodes the shares, and finally computes the tag for each of these encoded shares. 

We follow with more details. Let $m=(m[1],\ldots, m[k])$ be the message. The $\verifier$ computes the shares of every message block $(m[1], \ldots, m[k])$ using a $(\tau_1,\tau_2,\rho)\mbox{-}\mathsf{Ramp}$ scheme. It then encodes all the shares using the encoding scheme of the $\por$ scheme. Let the resulting encoded shares be $M_i[1], \ldots, M_i[n]$ for  $1 \leq i \leq \rho$. In other words, the result of the above two steps are $\rho$ encoded shares, each of which is an $n$-tuple in $(\F_q)^n$. The $\verifier$ now picks random values $a^{(i)},b_1^{(i)}, \ldots, b_n^{(i)}  \in \F_q$ for $1 \leq i \leq \rho$ and computes the tags as follows: 
	$$S_{i}[j]  = b_j^{(i)} + a^{(i)} M_i[j] \qquad \text{for}~1 \leq i \leq \rho, 1 \leq j \leq n.$$ 

The verifier gives $\prover_i$ the tuple of encoded messages $(M_i[1],\ldots, M_i[n])$ and the corresponding tags $(S_i[1], \ldots, S_i[n])$.  We call this scheme the $\mathsf{Basic}\mbox{-}\mspor$ scheme. The following is straightforward from~\thmref{ramp}.

\begin{theorem} \label{thm:first}
	Let  $\Pi$ be an $(\eta,0,0,1)\mbox{-}\por$ scheme  of Shacham and Waters~\cite{SW08} with a response code of Hamming distance $\widetilde{\mathsf{d}}$ and the size of challenge space $\gamma=q^n$ (where $\widetilde{\mathsf d}$ is given by~\eqnref{d}). Let $\mathsf{Ramp}$ be a $(\tau_1,\tau_2,\rho)$-ramp scheme. Then  $\mathsf{Basic}\mbox{-}\mspor$ defined above is an $\mspor$ scheme with the following properties:
	\begin{enumerate}
		\item Privacy: $\mathsf{Basic}\mbox{-}\mspor$ is $\tau_1$-private.
		\item Security: $\mathsf{Basic}\mbox{-}\mspor$ is $\paren{\eta,0,\tau_2,\rho}$-threshold secure, where $\eta=1 - \frac{\widetilde{\mathsf{d}}(q-1)}{2 \gamma q}$.
		\item Storage Overhead: The $\verifier$ needs to store $\rho (n+1)$ field elements and every $\prover_i$ needs to store $2n$ field elements.
	\end{enumerate}
\end{theorem}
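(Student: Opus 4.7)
The plan is to prove each of the three claims separately, with the security and storage claims reducing almost directly to results stated earlier in the excerpt and with the privacy claim requiring the only genuine verification.

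For privacy, I would appeal to the secrecy property of the underlying $(\tau_1, \tau_2, \rho)$-ramp scheme. Any coalition of at most $\tau_1$ malicious provers sees the shares $m_i$ at its indices together with the associated encoded blocks $M_i[j]$ and tags $S_i[j] = b_j^{(i)} + a^{(i)} M_i[j]$. By the ramp scheme's secrecy, the shares themselves reveal nothing about $m$. It remains to argue that the tags reveal nothing more: the coefficients $a^{(i)}$ and $b_j^{(i)}$ are chosen uniformly and independently in $\F_q$, independently for each prover and each position, so each $b_j^{(i)}$ acts as a one-time pad on the value $a^{(i)} M_i[j]$. Consequently, the tuple of tags seen by the coalition is distributed uniformly and independently of the underlying shares, so the adversary's view is simulable from the shares alone, and nothing about $m$ leaks.

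For security, the extractor mirrors the one built in the proof of Theorem~\ref{thm:ramp}. Given $\tau_2$ proving algorithms $\cP_{i_1}, \dotsc, \cP_{i_{\tau_2}}$, all with $\suca(\cP_{i_j}) \geq 1 - \widetilde{\mathsf{d}}(q-1)/(2 \gamma q)$, one runs on each $\cP_{i_j}$ the single-server extractor guaranteed by Theorem~\ref{thm:SW08}, which, because each prover is a fresh independent instance of the keyed Shacham--Waters scheme, always outputs the correct encoded share $M_{i_j}$. Inverting the $\por$ encoding gives $\tau_2$ correct ramp-scheme shares $m_{i_j}$, and invoking $\mathsf{Reconstruct}$ then deterministically recovers $m$. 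The failure probability is exactly $0$, matching the bound $\nu = 0$ in the statement.

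For storage, each of the $\rho$ provers runs one independent copy of the keyed Shacham--Waters scheme, which requires a single key $(a^{(i)}, b_1^{(i)}, \dotsc, b_n^{(i)}) \in \F_q^{n+1}$; summing over $i$ gives $\rho(n+1)$ field elements stored by $\verifier$. Each $\prover_i$ holds the encoded share $M_i \in \F_q^n$ together with its $n$ tags, hence $2n$ field elements. The main obstacle in the whole argument is the privacy step: one must be careful to use the fact that the tag keys $a^{(i)}$ and $b_j^{(i)}$ are sampled fresh and independently across both servers and positions, since otherwise the additive masking could fail to hide $M_i[j]$ and the ramp scheme's guarantee alone would not suffice. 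Once this independence is in hand, the remaining two claims are immediate consequences of Theorem~\ref{thm:ramp} and Theorem~\ref{thm:SW08}.
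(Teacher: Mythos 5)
Your proposal is correct and follows essentially the same route as the paper, which disposes of this theorem with the single remark that it is ``straightforward from Theorem~\ref{thm:ramp}'': running the single-server Shacham--Waters extractor on each of $\tau_2$ provers and then invoking $\mathsf{Reconstruct}$, invoking ramp-scheme secrecy for privacy, and counting $\rho(n+1)$ key elements and $2n$ stored elements per prover. Your additional observation that the tags leak nothing because the keys $a^{(i)},b_j^{(i)}$ are fresh, uniform, and independent across provers and positions is a detail the paper leaves implicit here (it only spells out the analogous, harder argument for the derandomized keys in Theorem~\ref{thm:optimized_storage}), and it is correct.
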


In the  construction mentioned above, the $\verifier$ needs to store $\rho (n +1) $ elements of $\F_q$, which is almost the same as the total storage requirements of all the provers.  The same issue was encountered by Paterson {\it et al.}~\cite{PSU13}, where the $\verifier$ has to store as much secret information as the size of the message. This seems to be the general drawback in the unconditionally secure setting. However, in the case of $\mspor$, we can improve the storage requirement of the $\verifier$ as shown in the next  step.

\paragraph{Step 2.} In this step, we  improve the above-described $\mspor$ scheme to achieve considerable reduction on the storage requirement of the $\verifier$. The resulting scheme also provides unbounded audit capability against computationally unbounded adversarial provers, and it also ensures $\tau_1$-privacy.

The main observation that results in the reduction in the storage requirements of the $\verifier$ is the fact that we can partially derandomize the keys generated by the $\verifier$. We use one of the most common techniques in derandomization. The keys in this scheme are generated by  $\tau_1$-wise independent functions.\footnote{A function  is a {\em $\tau_1$-wise independent} function if every subset of $\tau_1$ outputs is independent and equally likely. It should be noted that this does not imply that all the outputs of the function  are mutually independent.}
Our construction works as follows: we pick $n+1$ random polynomials, $f_1(x), \ldots, f_n(x), g(x) \in \F_q[x]$, each of degree at most $\tau_1-1$. Then the $\verifier$ computes the secret key by evaluating the polynomials $f_j(x)$ and $g(x)$ on $\rho$ different values, say
\[ b_j^{(i)} = f_j(i) \qquad \text{and} \qquad a_i=g(i)\] 
for $1 \leq j \leq n,~\text{and}~1 \leq i \leq \rho $.  The $\verifier$ then computes the encoded shares and their corresponding tags  as in $\mathsf{Basic}\mbox{-}\mspor$, i.e., 
	$$S_{i}[j]  = b_j^{(i)} + a^{(i)} M_i[j] \qquad \text{for}~1 \leq i \leq \rho, 1 \leq j \leq n.$$ 

\figref{SW_optimized} is the  formal description of this scheme. 
For the scheme described in~\figref{SW_optimized}, we prove the following result.
\begin{theorem} \label{thm:optimized_storage}
	Let $\mathsf{Ramp = (ShareGen,~Reconstruct)}$ be a $(\tau_1,\tau_2,\rho)$-ramp scheme.  Let  $\Pi$ be a single-prover Shacham-Waters scheme~\cite{SW08} with a response code of Hamming distance $\widetilde{\mathsf{d}}$ and the size of challenge space $\gamma$.  Then $\mathsf{SW}\mbox{-} \mspor$, defined in~\figref{SW_optimized},  is an $\mspor$ system with the following properties:
	\begin{enumerate}
		\item Privacy: $\mathsf{SW}\mbox{-} \mspor$ is $\tau_1$-private.
		\item Security: $\mathsf{SW}\mbox{-} \mspor$ is $\paren{\eta,0,\tau_2,\rho}$-threshold secure, where $\eta=1 - \frac{\widetilde{\mathsf{d}}(q-1)}{2 \gamma q}$.
		\item Storage Overhead: $\verifier$ needs to store $\tau_1(n+1)$ field elements and every $\prover_i$ (for $1 \leq i \leq \rho$) needs to store $2n$ field elements.
	\end{enumerate}
\end{theorem}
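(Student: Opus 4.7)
The plan is to mirror the structure of the proof of \thmref{ramp} (the ramp-scheme-based worst-case $\mspor$), adapting it to account for the derandomized key generation via the polynomials $g, f_1, \ldots, f_n$. The storage claim is immediate: the $\verifier$ stores $n+1$ polynomials each of degree at most $\tau_1 - 1$, giving $\tau_1(n+1)$ field elements; each $\prover_i$ stores the encoded share $M_i$ and the tag vector $S_i$, totaling $2n$ field elements. Privacy and security will each follow from a single observation about polynomial evaluations, but at different levels of joint independence.

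For security, I would follow the extractor strategy from the proof of \thmref{ramp}: pick any $\tau_2$ provers whose proving algorithms have $\suca(\cP_i) \geq \eta$, invoke the single-server Shacham-Waters extractor on each to recover the encoded share $\widehat{M}_i$, collect them into a set $\mathcal{S}$, and run $\mathsf{Reconstruct}$ to output $m$. The key step is justifying that \thmref{SW08} still applies prover-by-prover, since the keys are no longer drawn independently. This reduces to the observation that for each fixed $i$, the marginal distribution of $(a^{(i)}, b_1^{(i)}, \ldots, b_n^{(i)})$ over a uniform choice of the coefficients of $g, f_1, \ldots, f_n$ is uniform on $\F_q^{n+1}$: a polynomial of degree at most $\tau_1 - 1$ with uniform independent coefficients evaluates to a uniform field element at any single point (the constant term alone is uniform and independent of the evaluation point). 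Hence each individual prover's view is distributed identically to the view in the standard keyed Shacham-Waters scheme, so the bound of \thmref{SW08} delivers perfect extraction of $M_i$ whenever $\suca(\cP_i) \geq \eta$.

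For privacy, I would argue that any coalition $I$ with $|I| \leq \tau_1$ sees only $(M_i[j], S_i[j])_{i \in I,\, j \in [n]}$, as the polynomials remain with the $\verifier$ and the tags of other provers are never transmitted to $I$. The ramp-scheme guarantee implies that the shares $\{m_i : i \in I\}$—and hence the encoded shares $\{M_i : i \in I\}$, which are a public deterministic function of the $m_i$—carry no information about $m$. For the tags, use the fact that $(a^{(i)}, b_j^{(i)})_{i \in I}$ are jointly uniform on $\F_q^{\tau_1(n+1)}$, since they arise from evaluating $n+1$ independent degree-$(\tau_1-1)$ polynomials at the $\tau_1$ distinct points of $I \subseteq \F_q^*$ (requiring only $q > \rho$). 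Conditioned on the shares, the map $b_j^{(i)} \mapsto S_i[j] = b_j^{(i)} + a^{(i)} M_i[j]$ is a bijection, so the tags are uniform on $\F_q^{\tau_1 n}$ and leak no additional information about $m$. The main obstacle will be cleanly separating the two independence requirements: security only needs $1$-wise uniformity of each prover's key, whereas privacy needs the full $\tau_1$-wise joint uniformity across any coalition, and choosing polynomial degree exactly $\tau_1 - 1$ is the minimal choice that delivers both simultaneously—this is what makes the $\tau_1(n+1)$ storage bound tight relative to the $\rho(n+1)$ bound of \thmref{first}.
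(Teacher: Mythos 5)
Your overall route is the same as the paper's: the extractor is the one from \thmref{ramp} (run the single-server Shacham--Waters extractor on $\tau_2$ provers, then $\mathsf{Reconstruct}$), the storage count is identical, and the one technical lemma you need --- that evaluations of $n+1$ independent random polynomials of degree at most $\tau_1-1$ at $\tau_1$ points are jointly uniform --- is exactly the paper's \factref{twise}. Where you differ is in how you allocate that lemma. The paper spends essentially all of its effort on the \emph{security} side, arguing that a coalition of $\tau_1$ provers cannot combine their message--tag pairs to narrow down the key space (each prover still has $q$ equally likely keys, jointly uniform across the coalition), and dismisses privacy in one line as ``straightforward from the ramp scheme.'' You do the opposite: you invoke only marginal ($1$-wise) uniformity of each key for security, and reserve the full $\tau_1$-wise joint uniformity for privacy, where you use it to show the tags are uniform conditioned on the shares and hence leak nothing beyond what the ramp scheme already protects. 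Your privacy argument is in fact more complete than the paper's, since the tags are a function of the shares and one does need to rule out cross-prover leakage through them; your closing remark that degree $\tau_1-1$ is the minimal choice serving both requirements is a nice observation absent from the paper.

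The one place where your argument as written falls short of the paper's is the security step. The $\mspor$ model explicitly permits the provers to interact \emph{before} fixing their proving algorithms, so $\cP_i$ may be constructed from the pooled view $\set{(M_{i'},S_{i'})}_{i'}$ of a coalition, not just from $\prover_i$'s own pair. Marginal uniformity of $K_i$ does not by itself guarantee that this pooled view leaves $K_i$ undetermined; if the coalition could pin down $a^{(i)}$, the standard attack on the keyed Shacham--Waters scheme (forging consistent tags without storing $M_i$) would apply and \thmref{SW08} would no longer deliver extraction. This is precisely the scenario the paper's joint-uniformity argument is aimed at: it shows that for any coalition of size at most $\tau_1$, all $q^{\tau_1}$ key combinations remain equally likely, so no prover's effective key space shrinks below $q$. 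You prove exactly this joint statement in your privacy paragraph, so the fix is only to invoke it in the security argument as well rather than relying on the marginal distribution alone.
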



\begin{figure} [t]
\begin{center}
\fbox{
\begin{minipage}[l]{6in}
\medskip
{
{\bf Input:} The  $\verifier$ gets a message $m=(m[1], \ldots, m[n])$ as input. Let $\prover_1, \ldots, \prover_\rho$ be the set of $\rho$ provers. Let $q$ be a prime number greater than $\rho$.
\begin{description}
	\item [Initialization Stage:] The $\verifier$ performs the following steps
	\begin{enumerate}
		\item The $\verifier$ choses $n+1$ random polynomials of degree at most $\tau_1-1$, $f_1(x), \ldots, f_n(x),g(x) \in \F_q[x]$ and  a $(\tau_1,\tau_2,\rho )$-ramp scheme $\mathsf{Ramp}=(\mathsf{ShareGen, Reconstruct})$.
		\item For every server $i$, the $\verifier$ does the following:
		\begin{enumerate}
			\item  Compute $\rho$ shares of every message block using the share generation algorithm of $\mathsf{Ramp}$ as follows: $(m_1[j], \ldots, m_\rho[j]) \gets \mathsf{ShareGen}(m[j])$ for $1 \leq j \leq n$.
			\item The $\verifier$  encodes the message  as $e(m_i[j]) = M_i[j]$ for $1 \leq j \leq n, 1 \leq i \leq \rho$.
			\item Compute $b_1^{(i)}=f_1(i), \ldots,b_n^{(i)}= f_n(i), a^{(i)}=g(i) $. 
			\item Compute the tag $S_{i}[j]= b_j^{(i)} + a^{(i)} M_{i}[j]$ for $1 \leq j \leq n$. 
			\end{enumerate}
		\item The $\verifier$ gives $\set{(M_{i}[j], S_{i}[j])}_{1 \leq j \leq n}$ to $\prover_i$.
	\end{enumerate}
	\item [Challenge Phase:] During the audit phase, $\verifier$ picks a prover, $\prover_i$, and runs the challenge-response algorithm of a single-server Shacham-Waters scheme. It computes the corresponding keys by computing the random polynomials chosen during the set up phase.
 \end{description}
}
\end{minipage}
}\caption{$\mspor$ Using Optimized Shacham-Waters scheme ($\mathsf{SW}\mbox{-} \mspor$).} \label{fig:SW_optimized} 
\end{center}
\end{figure}

\begin{proof}
	 The privacy guarantee of $\mathsf{SW\mbox{-}}\mspor$ is straightforward from the secrecy property of the underlying ramp scheme. 
	 
For the security guarantee, we have to show an explicit construction of $\ext$, that on input  proving algorithms $\cP_1, \ldots , \cP_\rho$, outputs $m$ if $\suc(\cP_i) > \eta$ for at least $\tau_2$ proving algorithms. However, there is a subtle issue that we have to deal with before using the proof of~\thmref{ramp}, because of the relation between every message and tag pair. It was noted by Paterson {\it et al.}~\cite{PSU13} that, if the adversarial prover learns the secret key, then it can break the $\por$ scheme.  We first argue that a set of $\tau_1$ colluding provers cannot have an undue advantage from exploiting the linear structure of the message-tag pairs.


We now prove that any set of $\tau_1$ provers do not learn anything about the keys generated using $n+1$ polynomials of degree at most $\tau_1-1$. The idea is very similar to the single-prover case. Paterson {\it et al.}~\cite{PSU13} noted that in the single prover case, for an $n$-tuple encoded message, the key is a tuple of $n+1$ uniformly random elements $(a,b_1, \ldots, b_n)$ in $\F_q$. Further, from the point of view of a prover, there  are $q$ possible keys -- the value of $a$ determines the $n$-tuple $(b_1, \ldots, b_n)$ uniquely, but $a$ is completely undetermined. In the $\mspor$ case, we have $\rho$ keys. Each prover in a given set of $\tau_1$ provers has $q$ possible keys, as discussed above. However, it is conceivable that they can use their collective knowledge to learn something about the keys.  In what follows, we show that they cannot determine any additional information about their keys by combining the information they hold collectively. 

Let $I=\set{i_1,\ldots, i_{\tau_1}}$ be the indices of any arbitrary set of $\tau_1$ provers. Let $S_i$ denote the set of possible keys for $\prover_i$, for $i \in I$. Consider any list of $\tau_1$ keys $(K_{i_1}, K_{i_2}, \ldots, K_{i_{\tau_1}})$. Recall that $K_i$ (for $i \in I$) has the form $\paren{a^{(i)}, b^{(i)}_1, \ldots, b^{(i)}_n}$, where $a^{(i)}$  and $b^{(i)}_j$ (for $1\leq j \leq n$) are generated by random polynomials of degree $\tau_1$.  We first consider $a^{(i)}$ (for $i \in I$). Note that the vector $\paren{b^{(i)}_1, \ldots, b^{(i)}_n}$ is defined uniquely by $a^{(i)}$ and the  set of all encoded message-tag pairs. We have already shown that any set of $\tau_1$ provers cannot learn anything about the random polynomial $g(x)$ used to generate the $a^{(i)}$ for all $i \in I$. We use the following well-known fact to show the any set of $\tau_1$ provers do not learn any additional information about the keys.

\begin{fact} \label{fact:twise}
Let $t>0$ be an integer, let $q$ be a prime number, and let $\F_q$ be a finite field. Let $h_0,h_1, \ldots, h_{t-1} \in \F_q$ be random elements picked uniformly at random. Define $h(x)=\sum_{i=0}^{t-1} h_i x^i$ for all $\alpha \in \F_q$. Then,
\begin{align}
 \p \sparen{h\paren{x_1} = y_1 \wedge \ldots \wedge h\paren{x_{\tau}} = y_{t}} = \prod_{i=1}^{t} \p \sparen{h(x_{\alpha_i}) = y_i }. \label{eq:twise}
 \end{align}
 Since $h(x)$ is uniformly distributed in $\F_q$, the probability computed in~\eqnref{twise} is actually equal to $q^{-t}$.\end{fact}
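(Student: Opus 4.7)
The plan is to prove Fact~\ref{fact:twise} by recognizing it as the standard $t$-wise independence property of random polynomials of degree at most $t-1$, established via a Vandermonde-matrix bijection argument. I will assume, as is implicit from the context (where the $x_i$ are distinct prover indices in $\{1,\dots,\rho\}$ with $\rho < q$), that the evaluation points $x_1,\dots,x_t$ are pairwise distinct elements of $\F_q$.

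First I would set up the evaluation map $\Phi: \F_q^t \to \F_q^t$ defined by $\Phi(h_0,h_1,\dots,h_{t-1}) = (h(x_1), h(x_2), \dots, h(x_t))$, where $h(x) = \sum_{i=0}^{t-1} h_i x^i$. This map is linear, and its matrix representation is the Vandermonde matrix $V$ whose $(i,j)$-entry is $x_i^{j-1}$. Its determinant is $\det(V) = \prod_{1 \leq i < j \leq t}(x_j - x_i)$, which is nonzero precisely because the $x_i$ are distinct and $\F_q$ is a field. Hence $\Phi$ is a bijection on $\F_q^t$.

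Next, since the coefficients $h_0,\dots,h_{t-1}$ are chosen independently and uniformly at random from $\F_q$, the input vector to $\Phi$ is uniformly distributed on $\F_q^t$. A bijection sends the uniform distribution to the uniform distribution, so the random vector $(h(x_1),\dots,h(x_t))$ is uniformly distributed on $\F_q^t$. Therefore, for any target $(y_1,\dots,y_t) \in \F_q^t$,
\begin{equation*}
\p\sparen{h(x_1) = y_1 \wedge \dots \wedge h(x_t) = y_t} = \frac{1}{q^t}.
\end{equation*}
In particular, each marginal probability $\p[h(x_i) = y_i]$ equals $1/q$ (taking any single coordinate of a uniform vector), so the joint probability factors as the product of the marginals, which is exactly the claimed identity.

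The only subtlety, and the main thing to handle carefully in the write-up, is justifying that the $x_i$ may be assumed distinct: this is why the fact is invoked with $\tau_1$ distinct prover indices $i \in I$ and why the paper takes $q > \rho$ (ensuring all indices lie in $\F_q$ and are pairwise distinct). Once distinctness is noted, the Vandermonde argument is purely a routine linear-algebra observation, and no deeper obstacle arises.
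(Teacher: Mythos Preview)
Your proof is correct and is the standard Vandermonde-bijection argument for $t$-wise independence of evaluations of a random degree-$(t-1)$ polynomial. Note, however, that the paper does not actually supply a proof of this statement: it is labeled a \emph{Fact} and introduced as ``the following well-known fact,'' with no argument given beyond the assertion itself. So there is no paper proof to compare against; you have simply filled in the (routine) details that the authors chose to omit, including the necessary observation that the evaluation points must be distinct, which you correctly tie back to the contextual assumption $q>\rho$.
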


By construction, $g(x)$ is a random polynomial of degree at most $\tau_1-1$. Fact~\ref{fact:twise} then implies that any combination of $\set{a^{(i)}}_{i \in I}$ is equally likely. A similar argument, with the  $a^{(i)}$'s replaced by  the $b^{(i)}_j$'s (for all $i \in I$ and $1\leq j \leq n$) and the polynomial $g(x)$ replaced by $f_j(x)$ (for $1 \leq j \leq n$), gives that all  set of $\tau_1$ keys are equally likely.  In other words, the set of provers in the set $I$ cannot determine any additional information about their keys by combining the information they hold collectively. 

We now complete the security proof by describing an $\ext$ that outputs the file if $\tau_2$ provers succeed with high enough probability. 
The description of the $\ext$ and its analysis is same as that of~\thmref{ramp}. We give it for the sake of completeness.
	\begin{enumerate}
		\item $\ext$ chooses $\tau_2$ provers and runs the extraction algorithm of the underlying single-server $\por$ system on each of these provers. In the end, it outputs $\widehat{M}_{i_j}$ for the corresponding provers $\prover_{i_j}$. It defines $\mathcal{S} \gets \{\widehat{m}_{i_1}, \ldots, \widehat{m}_{i_{\tau_2}} \}$. Note that the $\ext$ of the underlying $\por$ scheme has already computed $e^{-1}$ on the set $\set{ \widehat{M}_{i_1}, \ldots, \widehat{M}_{i_{\tau_2}} }$.
		\item $\ext$ invokes the $\mathsf{Reconstruct}$ algorithm of the underlying ramp scheme with the  elements of $\widetilde{\mathcal{S}}$ to compute $m'$. 
	\end{enumerate}	  
	 Now note that the $\verifier$ interacts with every $\prover_i$ independently. We know from the security of the underlying $\por$ scheme of Shacham-Waters that there is an extractor that always outputs the encoded message whenever $\suc_\mathsf{avg}(\cP_i) \geq \eta.$ Therefore, if all the $\tau_2$ chosen proving algorithms succeed with probability at least $\eta$ over all possible keys, then the set $\mathcal{S}$  will have  $\tau_2$ correct shares. From the correctness of the $\mathsf{Reconstruct}$ algorithm and $e^{-1}(\cdot)$, we know that the message output in the end by the $\ext$ will be the message $m$.

	 For the storage requirement, the $\verifier$ has to  store the coefficients of all the random polynomials $f_1(x), \ldots, f_n(x),g(x)$, which amounts to a total of $\tau_1(n+1) = \tau_1 n +n$ field elements.  	
	 \end{proof}

\section{Conclusion}
In this paper, we studied $\por$ systems when multiple provers are involved ($\mspor$). We motivated and defined the security of $\mspor$ in the worst-case~(Definition~\ref{defn:worst}) and the average-case~(Definition~\ref{defn:average}) settings, and extended the hypothesis testing techniques used in the single-server setting~\cite{PSU13} to the multi-server setting. We also motivated the study of confidentiality of the outsourced message. We gave $\mspor$ schemes which are secure under both these security definitions and provide reasonable confidentiality guarantees even when there is no restriction on the computational power of the servers. 
 In the end of this paper, we looked at an optimized version of $\mspor$ system when instantiated with the unconditionally secure version of the Shacham-Waters scheme~\cite{SW08}. We exhibited that, in the multi-server setting with computationally unbounded provers, one can overcome the limitation that the verifier needs to store as much secret information as the provers.

\bibliographystyle{plain}
\bibliography{por1}

\begin{table}[htb]
\label{notation.tab}
\begin{center}
\begin{tabular}{l|l}
$c$ & challenge \\ \hline
$C^\bot$ & dual of a code $C$ \\ \hline
$\mathsf{d}^*$ & distance of the response code \\ \hline
$\mathsf{d}$ & distance of a codeword \\ \hline
$\mathsf{d}^\bot$ & dual distance of a code \\ \hline
$\mathsf{dist} $ & hamming distance between two vectors \\ \hline
$\mathbf{G}$ & generator matrix of a code\\ \hline
$k$ & length of a message \\ \hline
$K$ & key (in a keyed scheme) \\ \hline
$\ell$ & number of message-blocks \\ \hline
$m$ & message \\ \hline
$m[i]$ & $i$-th message block \\ \hline
$\widehat{m}$ & message outputted by the $\ext$ \\ \hline
$\mathcal{M}$ & message space \\ \hline
$M$ & encoded message \\ \hline
$M[i]$ & $i$-th encoded message \\ \hline
$M_j[i]$ & $i$-th encoded message on $\prover_j$\\ \hline
$\mathcal{M}^*$ & encoded message space \\ \hline
$n$ & number of provers  \\ \hline
$N$ & codeword length  \\ \hline
$\mathcal{P}_i$ & proving algorithm of $i$-th $\mathsf{Prover}$ \\ \hline
$q$ & order of underlying finite field\\ \hline
$r$ & response \\ \hline
$r^M$ & response vector for encoded message $M$\\ \hline
$S$ & tag (in a keyed scheme) \\ \hline
$\mathsf{succ}(\mathcal{P})$ & success probability of proving algorithm \\ \hline
$\mathcal{R}^*$ & response code \\ \hline
$\Gamma$ & challenge space \\ \hline
$\gamma$ & number of possible challenges \\ \hline
$\Delta$ & response space \\ \hline
$\varphi$ & column sparsity of a matrix \\ \hline
$\rho$  & response function \\ \hline
$\tau$ & privacy threshold \\ \hline
$\zeta$ & row sparsity of a matrix \\ \hline
\end{tabular}
\caption{Notation used in this paper}
\end{center}
\end{table}

\end{document}